\documentclass[runningheads,11points]{llncs}

\usepackage[russian,greek,english]{babel}
\usepackage[utf8x]{inputenc}
\usepackage{amsmath,ifthen,color,eurosym}

\usepackage{wrapfig,hyperref,cite}
\usepackage{latexsym,amsmath,amssymb,url}
\usepackage{fancyhdr}
\usepackage{graphicx}
\usepackage{stmaryrd,wasysym}
\usepackage{colortbl,datetime}
\usepackage{boxedminipage}

\usepackage[lined,boxed,commentsnumbered]{algorithm2e}

\newcommand{\remove}[1]{}

%


\spnewtheorem{observation}{Observation}{\bfseries}{\itshape}
\spnewtheorem{claimN}{Claim}{\bfseries}{\itshape}

\spnewtheorem{remarkLI}{Remark}{\bfseries}{\itshape}

\newcommand{\PartIntSup}[1]{\left\lceil #1\right\rceil}
\newcommand{\PartIntInf}[1]{\left\lfloor #1\right\rfloor}

\renewenvironment{proof}{\noindent \textsc{Proof:}}{\hfill$\square$\medskip}
\newenvironment{proofof}{\noindent \textsc{Proof:}}{\hfill$\Diamond$\medskip}




\fancyhead[LE, RO]{\thepage}

\title{On the complexity of computing the\\$k$-restricted edge-connectivity of a graph\thanks{An extended abstract of this article appeared in the \emph{Proceedings of the 41st International Workshop on Graph-Theoretic Concepts in Computer Science} (\textbf{WG}), pages 219-233, volume 9224 of LNCS, Garching, Germany, June \textbf{2015}.}
}
\author{Luis Pedro Montejano~\inst{1} and Ignasi Sau~\inst{2}}

\authorrunning{L. P. Montejano and I. Sau}
\titlerunning{Computing the $k$-restricted edge-connectivity of a graph}

\institute{D\'epartement de Math\'ematiques, Universit\'e de Montpellier, Montpellier, France.\\
\email{lpmontejano@gmail.com}
\and AlGCo project team, CNRS, LIRMM, Montpellier, France.\\
\email{ignasi.sau@lirmm.fr}}

%
%
%
%
%

\begin{document}

\maketitle
\setcounter{footnote}{0}

\vspace{-.25cm}

\begin{abstract}
The \emph{$k$-restricted edge-connectivity} of a graph $G$, denoted by $\lambda_k(G)$, is defined as the minimum size of an edge set whose removal leaves exactly two connected components each containing at least $k$ vertices. This graph invariant, which can be seen as a generalization of a minimum edge-cut, has been extensively studied from a combinatorial point of view. However, very little is known about the complexity of computing $\lambda_k(G)$. Very recently, in the parameterized complexity community the notion of \emph{good edge separation} of a graph has been defined, which happens to be essentially the same as the $k$-restricted edge-connectivity. Motivated by the relevance of this invariant from both combinatorial and algorithmic points of view, in this article we initiate a systematic study of its computational complexity, with special emphasis on its parameterized complexity for several choices of the parameters. We provide a number of {\sf NP}-hardness and {\sf W}[1]-hardness results, as well as {\sf FPT}-algorithms.

 \vspace{0.25cm}

\noindent\textbf{Keywords:} Graph cut; $k$-restricted edge-connectivity; Good edge separation; Parameterized complexity; {\sf FPT}-algorithm; Polynomial kernel.

\end{abstract}


\section{Introduction}
\label{sec:intro}

\textbf{Motivation}. The $k$-restricted edge-connectivity is a graph invariant that has been widely studied in the literature from a combinatorial point of view~\cite{BCFF97,BUV02,FF94,HMM12,ZY05,YL10}.
Since the classical edge-connectivity may not suffice to
measure accurately how connected a graph is after deleting some
 edges, Esfahanian and Hakimi \cite{EsHa88} proposed in 1988 the
notion of {\it restricted edge-connectivity}. Given a graph $G$, a non-empty set $S \subseteq E(G)$ is an \emph{edge-cut}  if $G-S$ has at least two connected components. An edge-cut $S$ is called a
{\it restricted edge-cut} if there are no isolated vertices in $G-S$. The {\it
restricted edge-connectivity}
$\lambda'(G)$ is the minimum cardinality over all {\it restricted
edge-cuts} $S$.

Inspired by the above definition, F\`abrega and Fiol
\cite{FF94} proposed in 1994 the notion of \emph{$k$-restricted
edge-connectivity}, where $k$ is a positive integer, generalizing
this notion. 
An edge-cut $S$ is called a
{\it $k$-restricted edge-cut} if every component of $G-S$ has at
least $k$ vertices. Assuming that $G$ has $k$-restricted edge-cuts,
the {\it $k$-restricted edge-connectivity} of $G$, denoted by
$\lambda_{k} (G)$, is  defined as the minimum cardinality over all
$k$-restricted edge-cuts of $G$, i.e.,
$$\lambda_{k}(G)=\min\{|S|\colon S\subseteq E(G) \hbox{ is a $k$-restricted edge-cut}\}.$$

Note that for any graph $G$, $\lambda_1(G)$ is the size of a minimum edge-cut, and $\lambda_2(G) = \lambda'(G)$. A connected graph $G$ is called {\it $\lambda_{k}$-connected} if
$\lambda_{k}(G)$ exists. Let $[X,Y]$ denote the set of edges between two disjoint vertex sets $X,Y\subseteq V(G)$, and let $\overline{X}$ denote the complement $\overline{X}=V(G)\setminus X$ of vertex set $X$. It is clear that for any
$k$-restricted cut $[X, \overline{X}]$ of size $\lambda_k(G)$, the graph $G-[X,
\overline{X}]$ has exactly two connected components.


Very recently, Chitnis \emph{et al}.~\cite{CCH+12} defined the notion of \emph{good edge separation} for algorithmic purposes. For two positive integers $k$ and $\ell$, a partition $(X,\overline{X})$ of the vertex set of a connected graph $G$ is called a \emph{$(k,\ell)$-good edge separation} if $|X|,|\overline{X}| > k$, $|[X,\overline{X}]| \leq \ell$, and both $G[X]$ and $G[\overline{X}]$ are connected. That is, it holds that $\lambda_k(G) \leq \ell$ if and only if $G$ admits a $(k-1,\ell)$-good edge separation. Thus both notions, which have been defined independently and for which there existed no connection so far, are essentially the same.

Good edge separations turned out to be very useful for designing parameterized algorithms for cut problems~\cite{CCH+12}, by using a technique known as \emph{recursive understanding}, which basically consists in breaking up the input graph into highly connected pieces in which the considered problem can be {\sl efficiently} solved. It should be mentioned that Kawarabayashi and Thorup~\cite{KaTh11} had defined before a very similar notion for {\sl vertex-cuts} and introduced the idea of recursive understanding.  This technique has also been subsequently used in~\cite{CLP+14,KOP+15,RRS16}.


Very little is known about the complexity of computing the $k$-restricted edge-connectivity of a graph, in spite of its extensive study in combinatorics. In this article we initiate a systematic analysis on this topic, with special emphasis on the parameterized complexity of the problem. In a nutshell, the main idea is to identify relevant parameters of the input of some problem, and study how the running time of an algorithm solving the problem depends on the chosen parameters. See~\cite{Cygan15book,DF99,FG06,Nie06} for introductory textbooks to this area.


\textbf{Our results}. We consider the following two problems concerning the $k$-restricted edge-connectivity of a graph.

\vspace{.4cm}

\begin{boxedminipage}{.92\textwidth}
\textsc{Existential Restricted Edge-connectivity (EREC)}\vspace{.1cm}

\begin{tabular}{ r l }
\textbf{~~~~Instance:} & A graph $G=(V,E)$  and a positive integer $k$. \\
\textbf{Question:} & Is $G$ $\lambda_k$-connected ?\\
\end{tabular}
\end{boxedminipage}

\vspace{.4cm}
\begin{boxedminipage}{.92\textwidth}
\textsc{Restricted Edge-connectivity (REC)}\vspace{.1cm}

\begin{tabular}{ r l }
\textbf{~~~~Instance:} & A connected graph $G=(V,E)$  and a positive integer $k$. \\
\textbf{Output:} & $\lambda_k(G)$, or a correct report that $G$ is not $\lambda_k$-connected.\\
\end{tabular}
\end{boxedminipage}

\vspace{.4cm}

 The latter problem can be seen as a generalization of computing a \textsc{Minimum Cut} in a graph, which is polynomial-time solvable~\cite{StWa97}. In Section~\ref{sec:statement} we prove that it is {\sf NP}-hard, even restricted to $\lambda_k$-connected graphs. In Section~\ref{sec:parameterized} we study the parameterized complexity of the \textsc{REC} problem. More precisely, given a connected graph $G$ and two integers $k$ and $\ell$, we consider the problem of determining whether $\lambda_k(G) \leq \ell$. Existing results concerning good edge separations imply that the problem is {\sf FPT} when parameterized by $k$ and $\ell$. We prove that it is {\sf W}[1]-hard when parameterized by $k$, and that it is {\sf FPT} when parameterized by $\ell$ but unlikely to admit polynomial kernels. Moreover, we prove that the  \textsc{EREC} problem  
 is {\sf FPT} when parameterized by $k$. Finally, in Section~\ref{sec:bounded-degree} we also consider the maximum degree $\Delta$ of the input graph as a parameter, and we prove that the  \textsc{EREC} problem 
  remains {\sf NP}-complete in graphs with $\Delta \leq 5$, and that the \textsc{REC} problem is {\sf FPT} when parameterized by $k$ and $\Delta$. Note that this implies, in particular, that the \textsc{REC} problem parameterized by $k$ is {\sf FPT}  in graphs of bounded degree. Table~\ref{tab:results} below summarizes the results of this article.

\vspace{-.25cm}
\begin{table}[h]
\begin{center}
\begin{tabular}{|c||c||c|c|c|c|}
\hline
Problem & Classical & \multicolumn{4}{c|}{Parameterized complexity with parameter}\\

\cline{3-6}
 & complexity & $k + \ell$ & $k$ &  $\ell$ & $k + \Delta$  \\
 \hline  \hline

 Is $G$    &    {\sf NP}c, even     &     &   {\sf FPT}   &      &  {\sf FPT}               \\

 $\lambda_k$-connected ?        &   if $\Delta \leq 5$     &   $\star$  &  (Thm~\ref{thm:existential-FPT})     &     $\star$    &   (Thm~\ref{thm:FPTdegree})        \\
                                                            &   (Thm~\ref{thm:NPhard-bounded-degree})     &   &   &   &                 \\

 \hline
                                             &     {\sf NP}h, even if $G$  &   {\sf FPT}  &   {\sf W}[1]-hard  &  {\sf FPT} (Thm~\ref{thm:FPT-by-l})     &   {\sf FPT}               \\
$\lambda_k(G) \leq \ell$ ?         &    is  $\lambda_k$-connected   &    (Thm~\ref{thm:param_k_l},  &   (Thm~\ref{thm:W[1]-hard})   & No poly  kernels   &   (Thm~\ref{thm:FPTdegree})         \\
                                               &    (Thm~\ref{thm:guaranteed})   &   by~\cite{CCH+12})  &     &    (Thm~\ref{thm:nopolykernel})  &                 \\
\hline
\end{tabular}
\end{center}
\caption{\label{tab:results} Summary of our results, where $\Delta$ denotes the maximum degree of the input graph $G$, and {\sf NP}c  (resp. {\sf NP}h) stands for {\sf NP}-complete (resp. {\sf NP}-hard). The symbol `$\star$' denotes that the problem is not defined for that parameter.\vspace{-.65cm}}
\end{table}

\textbf{Further research}. Some open questions are determining the existence of polynomial kernels for the \textsc{REC} problem with parameters $k + \ell$ or $k + \Delta$, speeding-up the {\sf FPT} algorithm of Theorem~\ref{thm:FPT-by-l}
 (which is quite inefficient), improving the bound on the maximum degree in Theorem~ \ref{thm:NPhard-bounded-degree}, and studying the (parameterized) complexity of the \textsc{REC} problem in planar graphs and other sparse graph classes.

\vspace{.15cm}

\textbf{Notation}. We use standard graph-theoretic notation; see for instance~\cite{Diestel05}. For a graph $G$, let $\Delta(G)$ denote its maximum degree, and for a vertex $v$, its degree in $G$ is denoted by $d_G(v)$. If $S \subseteq V(G)$, we define $G-S = G[V(G) \setminus S]$, and if $S \subseteq E(G)$, we define $G-S = (V(G), E(G) \setminus S)$. Unless stated otherwise, throughout the article $n$ denotes the number of vertices of the input graph of the problem under consideration. We will always assume that the input graphs are connected.

\section{Preliminary results}
\label{sec:statement}

Clearly, any connected graph $G$ is $\lambda_1$-connected, and $\lambda_1(G)$ can be computed in polynomial time by a \textsc{Minimum Cut} algorithm (cf.~\cite{StWa97}). However, for $k \geq 2$, there exist infinitely many connected graphs which are not $\lambda_k$-connected, such as the graphs containing a cut
vertex $u$ such that every component of $G-u$ has at most $k-1$ vertices (these graphs are called \emph{flowers} in the literature\cite{BUV02}, and correspond exactly to stars when $k=2$). Moreover, the  \textsc{EREC} problem 
is hard. Indeed, given a graph $G$, if $n$ is even and $k =n/2$, by~\cite[Theorem 2.2]{DyFr85} it is {\sf NP}-complete to determine whether $G$ contains two vertex-disjoint connected subgraphs of order $n/2$ each. We can summarize this discussion as follows.

%

\begin{remarkLI}\label{rem:NPhard}
The  \textsc{EREC} problem 
  is {\sf NP}-complete.
\end{remarkLI}

In Section~\ref{sec:bounded-degree} we will strengthen the above hardness result to the case where the maximum degree of the input graph is at most 5.

%
%

Note that Remark~\ref{rem:NPhard} implies that the \textsc{REC} problem problem is {\sf NP}-hard. Furthermore, even if the input graph $G$ is guaranteed to be $\lambda_k$-connected, computing $\lambda_k(G)$ remains hard, as shown by the following theorem.

%
%
%

\begin{theorem}\label{thm:guaranteed}
The \textsc{REC} problem is {\sf NP}-hard restricted to $\lambda_k$-connected graphs.
\end{theorem}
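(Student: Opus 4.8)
The plan is to reduce from the \textsc{Minimum Bisection} problem, which is well known to be {\sf NP}-hard: given a graph $H$ on $2m$ vertices and an integer $b$, decide whether $V(H)$ can be partitioned into two sets $V_1,V_2$ with $|V_1|=|V_2|=m$ and at most $b$ edges between them. The key difference with Remark~\ref{rem:NPhard} is that here the constructed graph must be \emph{guaranteed} $\lambda_k$-connected, so the hardness cannot stem from the mere (non)existence of a balanced connected bipartition; it must be encoded in the \emph{value} $\lambda_k(G)$. Given $(H,b)$, I would build $G$ as follows: keep the $2m$ vertices and the edges of $H$; add two \emph{apex} vertices $u_1,u_2$, each adjacent to all $2m$ vertices of $H$ but not to each other; and finally attach to each $u_i$ a set $B_i$ of exactly $m$ pendant vertices (a \emph{balloon}), each adjacent only to $u_i$. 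Then $G$ has $N=4m+2$ vertices, is connected (any two original vertices are joined through $u_1$), and is built in polynomial time. I set $k=N/2=2m+1$.

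First I would check that $G$ is always $\lambda_k$-connected: for any balanced bipartition $(V_1,V_2)$ of $V(H)$, the sets $\{u_1\}\cup B_1\cup V_1$ and $\{u_2\}\cup B_2\cup V_2$ each have size $2m+1=k$ and each induce a connected subgraph, since the apex on that side is adjacent to all its vertices; hence a $k$-restricted cut exists. The heart of the argument is to show that \emph{every} $k$-restricted cut has exactly this shape, so that $\lambda_k(G)$ records a genuine bisection. Because $N=2k$, any $k$-restricted cut produces exactly two components, each of size exactly $k$. Since $k\geq 2$, no balloon vertex can be separated from its apex (otherwise it would form an isolated component of size $1<k$), so each block $\{u_i\}\cup B_i$ lies entirely on one side. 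If both apices lay on the same side, that side would already contain the $2m+2$ vertices of $B_1\cup B_2\cup\{u_1,u_2\}$, leaving the other side with only vertices of $H$, hence at most $2m<k$ of them, a contradiction. Therefore the two apex blocks are separated; writing $V_i$ for the vertices of $H$ on the side of $u_i$, the size constraint $1+m+|V_i|=k=2m+1$ forces $|V_1|=|V_2|=m$, so the cut corresponds exactly to a balanced bipartition $(V_1,V_2)$ of $H$.

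It then remains to evaluate the cut size. As the balloons contribute no crossing edges and $u_1u_2\notin E(G)$, the crossing edges are precisely the edges of $H$ between $V_1$ and $V_2$, the $m$ edges from $u_1$ to $V_2$, and the $m$ edges from $u_2$ to $V_1$, for a total of $\mathrm{cut}_H(V_1,V_2)+2m$. Minimizing over all balanced bipartitions yields
\[
\lambda_k(G)=\mathrm{minbisec}(H)+2m .
\]
Consequently $\lambda_k(G)\leq b+2m$ if and only if $H$ admits a bisection of width at most $b$, so any algorithm computing $\lambda_k$ on the (always $\lambda_k$-connected) graph $G$ decides \textsc{Minimum Bisection}, which proves the theorem. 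I expect the main obstacle to be exactly the structural step of ruling out all degenerate cuts — in particular those keeping both apices together — and this is what the balloons of size precisely $m$ are designed to achieve, ensuring that $\lambda_k(G)$ faithfully reflects the minimum bisection width rather than an artifact of the gadget.
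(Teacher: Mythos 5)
Your reduction is correct, and it is a close cousin of the paper's: both start from \textsc{Minimum Bisection} and both add two non-adjacent universal (apex) vertices so that connectivity of the two sides comes for free and $\lambda_{N/2}$-connectivity is guaranteed. The difference lies in how the two special vertices are forced onto opposite sides of an optimal cut. The paper reduces from \textsc{Minimum Bisection} restricted to connected 3-regular graphs and uses a local exchange argument (swap a suitable vertex $u$ with one of the universal vertices and show the cut strictly decreases, which relies on the degree bound) to rule out solutions keeping both universal vertices together. You instead attach a pendant ``balloon'' of exactly $m$ leaves to each apex, so that the constraint $N=2k$ plus the fact that no leaf can be cut from its apex forces, by pure counting, each side to consist of one apex, its balloon, and exactly $m$ vertices of $H$; this lets you reduce from unrestricted \textsc{Minimum Bisection} and yields the clean identity $\lambda_k(G)=\mathrm{minbisec}(H)+2m$. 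Your structural step is arguably more robust (no regularity assumption, no delicate ``it can be checked'' cut accounting) at the price of a slightly larger gadget; both arguments are sound, and your write-up correctly handles the one subtle point, namely that a minimum $k$-restricted cut leaves exactly two components, each of size exactly $k$.
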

\begin{proof} We prove it for $n$ even and $k = n/2$. The reduction is from the \textsc{Minimum Bisection} problem\footnote{Given a graph $G$ with even number of vertices, the \textsc{Minimum Bisection} problem consists in partitioning $V(G)$ into two equally-sized parts minimizing the number of edges with one endpoint in each part.} restricted to connected 3-regular graphs, which is known to be {\sf NP}-hard~\cite{BeKa01}. Given a 3-regular connected graph $G$ with even number of vertices as instance of \textsc{Minimum Bisection}, we construct from it an instance $G'$ of \textsc{REC} by adding two non-adjacent universal vertices $v_1$ and $v_2$. Note that $G'$ is $\lambda_{n/2}$-connected, since any bipartition of $V(G')$ containing $v_1$ and $v_2$ in different parts induces two connected subgraphs.

We claim that $v_1$ and $v_2$ should necessarily belong to different connected subgraphs in any optimal solution in $G'$. Indeed, let $(V_1,V_2)$ be a bipartition of $V(G)$ such that $|[V_1,V_2]|  = \lambda_{n/2}(G')$, and assume for contradiction that $v_1,v_2 \in V_1$. Since $G$ is connected, there is a vertex $u \in V_2$ with at least one neighbor in $V_1 \setminus \{v_1,v_2\}$. Let $V_1' := V_1 \cup \{u\} \setminus \{v_2\}$ and $V_2' := V_2 \cup \{v_2\} \setminus \{u\}$, and note that both $G[V_1']$ and $G[V_2']$ are connected. Since $u$ has at least one neighbor in $V_1 \setminus \{v_1,v_2\}$, $G$ is 3-regular, and $v_1$ and $v_2$ are non-adjacent and adjacent to all other vertices of $G'$, it can be checked that $|[V_1',V_2']| \leq |[V_1,V_2]| -1 = \lambda_{n/2}(G') -1$, contradicting the definition of $\lambda_{n/2}(G')$.

Therefore, solving the \textsc{REC} problem in $G'$ corresponds exactly to solving the \textsc{Minimum Bisection} problem in $G$, concluding the proof.\end{proof}


\vspace{-.35cm}

\section{A parameterized analysis}
\label{sec:parameterized}


The {\sf NP}-hardness  results of the previous section naturally lead to considering parameterized versions of the problem. In this section we consider the following three distinct parameterizations.

\vspace{.4cm}
\begin{boxedminipage}{.95\textwidth}
\textsc{Parameterized Restricted Edge-connectivity ($\mathbf{p}$-REC)}\vspace{.1cm}

\begin{tabular}{ r l }
\textbf{~~~~Instance:} & A connected graph $G=(V,E)$ and two integers $k$ and $\ell$. \\
\textbf{Parameter 1:} & The integers $k$ and $\ell$.\\
\textbf{Parameter 2:} & The integer $k$.\\
\textbf{Parameter 3:} & The integer $\ell$.\\
\textbf{Question:} & $\lambda_k(G) \leq \ell $ ?\\
\end{tabular}
\end{boxedminipage}\vspace{.4cm}


As mentioned in the introduction, determining whether $\lambda_k(G) \leq \ell$ corresponds exactly to determining whether $G$ admits a $(k-1,\ell)$-good edge separation. This latter problem has been recently shown to be solvable in time $2^{O(\min\{k,\ell\} \log(k+\ell))} \cdot n^3 \log n$ by Chitnis \emph{et al}.~\cite[Lemma II.2]{CCH+12}.


\begin{theorem}[Chitnis \emph{et al}.~\cite{CCH+12}]
\label{thm:param_k_l}
The $\mathbf{p}$-\textsc{REC} problem is {\sf FPT} when parameterized by both $k$ and $\ell$.
\end{theorem}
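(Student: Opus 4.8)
The plan is to derive the theorem directly from the equivalence already recorded in the introduction, combined with the existing algorithm of Chitnis \emph{et al}.~\cite{CCH+12}. Concretely, I would first recall that $\lambda_k(G) \leq \ell$ holds if and only if $G$ admits a $(k-1,\ell)$-good edge separation, and then invoke the {\sf FPT} algorithm for the latter problem as a black box. Since the algorithm is cited, the only content that genuinely needs to be spelled out is this equivalence, so I would concentrate the argument there and treat the running-time bound as given.

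One direction of the equivalence is immediate: given a $(k-1,\ell)$-good edge separation $(X,\overline{X})$, the set $[X,\overline{X}]$ is an edge-cut of size at most $\ell$ whose removal leaves the two connected graphs $G[X]$ and $G[\overline{X}]$, each with more than $k-1$, hence at least $k$, vertices; thus it is a $k$-restricted edge-cut certifying $\lambda_k(G)\leq\ell$. The slightly less obvious direction, and the only place where care is needed, is the forward one. Here I would use the observation already stated in the introduction that for any \emph{minimum} $k$-restricted cut $[X,\overline{X}]$ the graph $G-[X,\overline{X}]$ has exactly two connected components; consequently both $G[X]$ and $G[\overline{X}]$ are connected, each part has at least $k>k-1$ vertices, and $|[X,\overline{X}]|=\lambda_k(G)\leq\ell$, which is exactly a $(k-1,\ell)$-good edge separation. (One way to justify that a minimum $k$-restricted cut $S$ yields only two components is to contract each component of $G-S$, obtaining a connected graph with $|S|$ edges; if there were three or more components, an averaging argument shows that some contracted vertex has degree strictly smaller than $|S|$, and separating it off produces a smaller $k$-restricted cut, a contradiction.)

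With the equivalence in hand, the theorem follows by applying \cite[Lemma II.2]{CCH+12}, which decides whether $G$ admits a $(k-1,\ell)$-good edge separation in time $2^{O(\min\{k,\ell\}\log(k+\ell))}\cdot n^3\log n$. As this bound has the form $f(k,\ell)\cdot n^{O(1)}$ for a computable function $f$ depending only on $k$ and $\ell$, it certifies that $\mathbf{p}$-\textsc{REC} is {\sf FPT} parameterized by $k+\ell$ (indeed even by $\min\{k,\ell\}$). I expect no real algorithmic obstacle here: the heavy lifting resides entirely in the cited recursive-understanding routine, and the main task on our side is the bookkeeping of the equivalence above, including the harmless shift from the parameter $k$ to $k-1$.
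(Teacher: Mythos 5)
Your proposal is correct and matches the paper's treatment: the paper offers no proof of this theorem, simply citing \cite[Lemma II.2]{CCH+12} and relying on the equivalence between $\lambda_k(G)\leq\ell$ and the existence of a $(k-1,\ell)$-good edge separation, which it asserts (without proof) in the introduction. Your only addition is to spell out that equivalence, including the correct averaging/contraction argument showing a minimum $k$-restricted cut leaves exactly two components, which is a harmless and accurate filling-in of a detail the paper takes for granted.
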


We would like to note that any improvement on the running time of the algorithm behind Theorem~\ref{thm:param_k_l} would answer an open question raised in~\cite{CKP13,CFJ+14}, and would have direct consequences and improve the algorithms described in~\cite{CCH+12,CLP+14,KOP+15}.

As pointed out in~\cite{EsHa88,Hol13}, the $\mathbf{p}$-\textsc{REC} problem can be solved in time $O^*(n^{2k})$. Roughly, the idea is to guess two sets of $k$ vertices inducing a connected subgraph, contract them into two vertices $s$ and $t$, and then call a polynomial-time \textsc{Minimum Cut} algorithm between $s$ and $t$ (cf.~\cite{StWa97}). In other words, it is in \textsc{XP} when parameterized by $k$. The following theorem shows that this is essentially the best algorithm we can hope for when the parameter is only $k$. Indeed, since the blow-up of the parameter in the reduction is linear, the fact that $k$-\textsc{Clique} cannot be solved in time $f(k) \cdot n^{o(k)}$ unless an unlikely collapse occurs in parameterized complexity theory~\cite{ChenHKX06} implies that the $\mathbf{p}$-\textsc{REC} problem cannot be solved in time $f(k) \cdot n^{o(k)}$ either.

\begin{theorem}\label{thm:W[1]-hard}
The $\mathbf{p}$-\textsc{REC} problem is \emph{{\sf W}[1]}-hard when parameterized by $k$.
\end{theorem}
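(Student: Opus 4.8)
The plan is to reduce from $k$-\textsc{Clique}, which is {\sf W}[1]-hard parameterized by the clique size $k$ and, by~\cite{ChenHKX06}, has no $f(k)\cdot n^{o(k)}$-time algorithm under the usual assumptions. As the reduction keeps the parameter unchanged and blows up the number of vertices only polynomially, it yields {\sf W}[1]-hardness and the claimed running-time lower bound at once.

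Since \textsc{Clique} remains {\sf W}[1]-hard on $d$-regular graphs, and one may in addition assume $d \le n - 2$ (using a standard regularization that, for $k \ge 3$, preserves the existence of a $k$-clique), I would start from such a regular instance $G$. Regularity is what makes cuts tractable to count: for a vertex set $X$ with $|X| = k$ one has $|[X,\overline{X}]| = dk - 2\,e(X)$, where $e(X)$ is the number of edges of $G$ inside $X$; thus, over all $k$-subsets, this cut is minimized precisely when $e(X) = \binom{k}{2}$, that is, when $X$ is a clique.

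I would then form $G'$ from $G$ by adding a clique $C$ on $N := n^{3}$ new vertices together with all edges between $C$ and $V(G)$, and set the parameter to $k' := k$ and the threshold to $\ell := kN + dk - 2\binom{k}{2}$. For the forward direction, if $X \subseteq V(G)$ is a $k$-clique then $(X,\overline{X})$ satisfies $|X| = k$ and $|\overline{X}| = N + n - k \ge k$, both $G'[X]$ and $G'[\overline{X}]$ are connected (the latter because $C$ is a clique joined to all remaining original vertices), and a direct count gives $|[X,\overline{X}]| = \ell$; hence $\lambda_k(G') \le \ell$.

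The hard part is the converse. Given a partition $(Y,\overline{Y})$ with $|[Y,\overline{Y}]| \le \ell$, I would set $a := |C \cap Y|$, $p := |V(G)\cap Y|$, and let $e(A)$ count the edges of $G$ inside $A := V(G)\cap Y$, so that $|[Y,\overline{Y}]| = (a+p)N - a^{2} + dp - 2e(A) + an - 2ap$, and then force the partition into shape using that $N = n^{3}$ dwarfs every $O(n^{2})$ correction term. First, after possibly swapping the two sides, the $a(N-a)$ edges internal to $C$ force $a = O(k)$; second, since $|Y| = a + p \ge k$ whereas $a + p \ge k+1$ would make the leading term $(a+p)N$ overshoot $\ell$, we get $a + p = k$ exactly. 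Substituting $p = k - a$ and bounding $e(A) \le \binom{k-a}{2}$ then simplifies to $|[Y,\overline{Y}]| - \ell \ge a\,(n - d - 1)$, which is strictly positive as soon as $a \ge 1$ because $d \le n-2$. Hence $a = 0$, so $Y \subseteq V(G)$ has exactly $k$ vertices, and $|[Y,\overline{Y}]| = \ell$ forces $e(Y) = \binom{k}{2}$, i.e.\ $Y$ is a $k$-clique of $G$. I expect the algebra yielding the clean inequality $|[Y,\overline{Y}]| - \ell \ge a\,(n-d-1)$, and committing to one value of $N$ that makes all the comparisons go through simultaneously, to be the only genuinely delicate points.
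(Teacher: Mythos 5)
Your proof is correct and follows the same overall strategy as the paper (reduce from $k$-\textsc{Clique}, attach a huge clique $C$ of size $n^3$, and tune $\ell$ so that only $k$-cliques achieve the threshold), but the implementation differs in two substantive ways. First, where you normalize degrees by insisting that the source instance be $d$-regular, the paper keeps an arbitrary \textsc{Clique} instance and equalizes degrees \emph{inside the gadget}: each representative vertex $v$ is joined to $n^2-d_G(v)$ vertices of $C$, so every representative vertex has degree exactly $n^2$ in $G'$ and the cut value of a $k$-set $K$ of representatives is simply $kn^2-2|E(G'[K])|$. Your route instead leans on the {\sf W}[1]-hardness of \textsc{Clique} on regular graphs; this is a true but nontrivial known result (and your reduction also needs it to preserve the linear parameter blow-up for the $f(k)\cdot n^{o(k)}$ lower bound), so you must cite or prove it --- as written this is the one unsupported load-bearing claim. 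Second, because you join $C$ completely to $V(G)$, your threshold is $\ell=\Theta(kn^3)$, so the paper's one-line argument that any split of $C$ costs at least $n^3-1>\ell$ is unavailable to you; you compensate with the sharper computation yielding $|[Y,\overline{Y}]|-\ell\ge a(n-d-1)$, which I have checked and which is correct (it is exactly here that the hypothesis $d\le n-2$ is used). The trade-off is that the paper's variant needs no regularity and has a trivial ``don't cut $C$'' case, at the price of the slightly less symmetric attachment rule, while yours has a cleaner construction but a more delicate counting argument and an extra hardness prerequisite. Both are valid parameterized reductions with parameter $k'=k$ and polynomial size, so either establishes the theorem.
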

\begin{proof} We reduce from $k$-\textsc{Clique}, which is known to be {\sf W}[1]-hard~\cite{DF99}. The parameterized reduction is the same
as the one given by Downey \emph{et al}. in~\cite[Theorem 2]{DEFPR03} to show the
{\sf W}[1]-hardness of the \textsc{Cutting $k$ Vertices from a Graph}
problem, only the analysis changes.

Let $G=(V,E)$ be an $n$-vertex
graph for which we wish to determine whether it has a $k$-clique.
We construct a graph $G'$ as follows:
\begin{itemize}
\item [(1)] We start with a clique $C$ of size $n^3$ 
and $n$ \emph{representative} vertices corresponding bijectively
with the vertices of $G$.
\item [(2)] Every representative vertex $v$ is connected to
$n^2-d_G(v)$ arbitrary vertices of $C$. 
\item [(3)] If $uv\in E(G)$ then $uv\in E(G')$.
\end{itemize}

\begin{figure}[htb]
\begin{center}\vspace{-.6cm}
 \includegraphics[width=.4\textwidth]{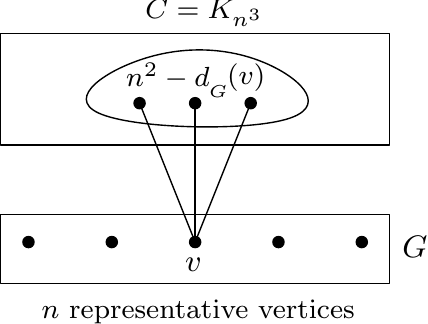}
 \caption{\label{fig:rep}Illustration of the graph $G'$ in the proof of Theorem~\ref{thm:W[1]-hard}.\vspace{-.45cm}}
\end{center}
\end{figure}

\vspace{-.45cm}

See Fig.~\ref{fig:rep} for an illustration of $G'$. Consider $\ell=kn^2-2{k\choose 2}$ and take $k\le n/2$. We claim that
$G$ has a $k$-clique if and only if $G'$ is a \textsc{Yes}-instance of
$\mathbf{p}$-\textsc{REC}.

Suppose first that $K\subseteq V(G)$ is a $k$-clique in $G$.
Obviously, $K$ is connected in $G'$ and has $k$ vertices. On the other hand, $G'-K$ is also connected with at least $n^3-|K|>k$
vertices. Finally, it is straightforward to check that
$|[K,V(G') \setminus K]|=kn^2-2{k\choose 2}=\ell$.

In the other direction, suppose $G'$ has a $k$-restricted edge-cut with at
most $\ell$ edges, i.e., there exists $K\subseteq V(G')$ such that $G[K]$
and $G'-K$
 are connected, $|K|\ge k$,  $|V(G') \setminus K|\ge k$,  and $|[K,V(G') \setminus K]|\le \ell$. Two
 cases need to be distinguished.

\emph{Case 1.} $C\cap K=\emptyset$. Then every vertex of $K$
must be a representative vertex. Hence $|[K,V(G') \setminus K]|=|K|n^2-2|E(G'[K])|$
since every representative vertex has degree $n^2$. As by hypothesis
$|[K,V(G') \setminus K]|\le \ell= kn^2-2{k\choose 2}$,  it follows that $|E(G'[K])| = {k\choose 2}$, hence $K$ must be a $k$-clique.

\emph{Case 2.} $C\cap K\neq\emptyset$. Note that for every bipartition $(C_1,C_2)$ of $C$ we have that $|[C_1,C_2]|\ge n^3-1>\ell$. Now suppose $C\cap (V(G') \setminus K)\neq\emptyset$ and  consider the bipartition $C_1=C\cap
K$ and $C_2=C\cap (V(G') \setminus K)$ of $C$. Then
$|[K,V(G') \setminus K]|\ge|[C_1,C_2]|\ge n^3-1>\ell$, a contradiction. Therefore, we have that
$C\cap (V(G') \setminus K)=\emptyset$. The proof concludes by applying Case 1 to $V(G') \setminus K$ instead of $K$. \end{proof}


In contrast to Theorem~\ref{thm:W[1]-hard} above, we now prove that  the  \textsc{EREC} problem  
 (which is {\sf NP}-complete by Remark~\ref{rem:NPhard}) is {\sf FPT} when parameterized by $k$. The proof uses the technique of \emph{splitters} introduced by Naor \emph{et al}.~\cite{NSS95}, which has also been recently used for designing parameterized algorithms in~\cite{KOP+15,CCH+12,CLP+14}. Our main tool is the following lemma.

\begin{lemma}[Chitnis \emph{et al}.~\cite{CCH+12}]
\label{lem:splitters}
There exists an algorithm that given a set $U$ of size
$n$ and  two integers
$a,b\in[0,n]$, outputs in time $2^{O(\min\{a,b\}\cdot\log(a+b))}\cdot n \log n$ a set ${\cal F}\subseteq 2^{U}$
with $|{\cal F}|=2^{O(\min\{a,b\}\cdot\log(a+b))}\cdot \log n$ such that for every two sets $A,B\subseteq U$, where $A\cap B=\emptyset$, $|A|\leq a$, and $|B|\leq b$, there exists a set $S\in{\cal F}$ with $A\subseteq S$ and $B\cap S=\emptyset$.
\end{lemma}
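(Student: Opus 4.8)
The plan is to build $\mathcal{F}$ from a splitter of Naor, Schulman and Srinivasan~\cite{NSS95}, arranging the construction so that the exponential blow-up is governed by $\min\{a,b\}$ rather than by $a+b$. Assume without loss of generality that $a \le b$, so $\min\{a,b\} = a$. A useful sanity check on the target size comes from the randomized version of the statement: if we form $S \subseteq U$ by placing each element into it independently with probability $a/(a+b)$, then a fixed disjoint pair $A,B$ with $|A| \le a$ and $|B| \le b$ satisfies $A \subseteq S$ and $B \cap S = \emptyset$ with probability at least roughly $\binom{a+b}{a}^{-1}$, so sampling about $\binom{a+b}{a}\cdot\log n$ independent sets would cover all relevant pairs with high probability. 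Since $\binom{a+b}{a} \le 2^{O(\min\{a,b\}\log(a+b))}$, this already matches the bound we are after, and the remaining task is to derandomize it.

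To obtain a deterministic family I would first invoke the splitter theorem of~\cite{NSS95} to construct, in time $(a+b)^{O(1)} \cdot n \log n$, a family $\mathcal{H}$ of functions $h : U \to [(a+b)^2]$ of size $(a+b)^{O(1)} \cdot \log n$ such that every set of at most $a+b$ elements of $U$ is mapped injectively into $[(a+b)^2]$ by some $h \in \mathcal{H}$; indeed, an $(n,a+b,(a+b)^2)$-splitter spreads any $(a+b)$-set with at most one element per bucket, which is exactly injectivity. Then, for every $h \in \mathcal{H}$ and every subset $T \subseteq [(a+b)^2]$ with $|T| \le a$, I would add the set $h^{-1}(T) = \{u \in U : h(u) \in T\}$ to $\mathcal{F}$. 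The number of admissible subsets $T$ is $\binom{(a+b)^2}{\le a} \le (a+1)\,(a+b)^{2a} = 2^{O(a\log(a+b))}$, so $|\mathcal{F}| \le |\mathcal{H}| \cdot 2^{O(a\log(a+b))} = 2^{O(\min\{a,b\}\log(a+b))} \cdot \log n$, and computing each $h^{-1}(T)$ in time $O(n)$ yields the claimed running time.

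For correctness, given disjoint $A, B$ with $|A| \le a$ and $|B| \le b$, I would choose $h \in \mathcal{H}$ injective on $A \cup B$ (extending $A \cup B$ to a set of exactly $a+b$ elements first, if needed) and set $T := h(A)$, so that $|T| = |A| \le a$. Then $A \subseteq h^{-1}(T)$ is immediate, while $B \cap h^{-1}(T) = \emptyset$ follows from injectivity, since any $u \in B$ with $h(u) \in T = h(A)$ would force $u \in A$, contradicting $A \cap B = \emptyset$. If instead $a > b$, one isolates the smaller set $B$ symmetrically, using complements $U \setminus h^{-1}(T)$ with $|T| \le b$. The step I expect to be the crux is precisely obtaining the $\min\{a,b\}$ dependence: the naive route of perfectly hashing $A \cup B$ into $[a+b]$ and then selecting the image of $A$ fails, because perfect hash families into $[a+b]$ unavoidably have size $2^{\Theta(a+b)}$. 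The resolution is to overshoot to the quadratically larger codomain $[(a+b)^2]$, where injective hashing is achievable with only polynomially many functions, and to pay the exponential cost solely on the enumeration of the at most $a$-element subsets $T$, of which there are only $\binom{(a+b)^2}{\le a} = 2^{O(\min\{a,b\}\log(a+b))}$.
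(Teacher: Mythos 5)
The paper does not prove this lemma itself but imports it from Chitnis \emph{et al}.~\cite{CCH+12}, and your argument is correct and follows essentially the same route as the proof in that reference: derandomize via an $(n,a+b,(a+b)^2)$-splitter of Naor \emph{et al}.~\cite{NSS95} and then enumerate the $\binom{(a+b)^2}{\le \min\{a,b\}} = 2^{O(\min\{a,b\}\log(a+b))}$ subsets of the codomain, taking preimages (or their complements when $b<a$). The size, time, and correctness accounting all check out, modulo the trivial degenerate case $\min\{a,b\}=0$, where the single set $\emptyset$ (or $U$) already suffices.
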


\begin{theorem}\label{thm:existential-FPT}
The  \textsc{EREC} problem 
 is {\sf FPT} when parameterized by $k$. More precisely, it can be solved by an algorithm running in time $2^{O(k \log k)}\cdot n^2  \log n$.
\end{theorem}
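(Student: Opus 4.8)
The goal is an {\sf FPT} algorithm deciding, given a connected graph $G$ and an integer $k$, whether $G$ is $\lambda_k$-connected, i.e.\ whether $V(G)$ can be partitioned into two parts $(A,B)$ such that both $G[A]$ and $G[B]$ are connected and $|A|,|B| \geq k$. The key structural observation is that we may restrict attention to a \emph{small} certificate: if such a partition exists, then we can find two connected subgraphs $T_A \subseteq G[A]$ and $T_B \subseteq G[B]$, each on exactly $k$ vertices, that are vertex-disjoint and whose vertex sets lie in different parts. Indeed, pick any spanning tree of $G[A]$ and of $G[B]$ and prune each down to a subtree on $k$ vertices. Conversely, if $G$ has two vertex-disjoint connected subgraphs on $k$ vertices each, then since $G$ is connected one can grow these two seeds into a full bipartition witnessing $\lambda_k$-connectivity, as noted already in Section~\ref{sec:statement}. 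So the problem reduces to detecting two vertex-disjoint connected $k$-vertex subgraphs.

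\textbf{Using splitters.}
The plan is to apply the splitter family of Lemma~\ref{lem:splitters} with $U = V(G)$ and $a = b = k$, producing a family ${\cal F} \subseteq 2^{U}$ of size $2^{O(k\log k)}\cdot \log n$ in time $2^{O(k\log k)}\cdot n\log n$. The defining property guarantees that for the (unknown) vertex sets $A_0 = V(T_A)$ and $B_0 = V(T_B)$ of a hypothetical solution, both of size $k$ and disjoint, there is some $S \in {\cal F}$ with $A_0 \subseteq S$ and $B_0 \cap S = \emptyset$. Thus for each candidate $S \in {\cal F}$ we ask the restricted question: does $G[S]$ contain a connected subgraph on $k$ vertices, and does $G[V(G)\setminus S]$ contain a connected subgraph on $k$ vertices? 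If some $S \in {\cal F}$ answers \textsc{Yes} on both sides, we report that $G$ is $\lambda_k$-connected; otherwise we report \textsc{No}. Correctness in one direction is immediate from the splitter property; in the other direction, any two disjoint connected $k$-vertex subgraphs found for a particular $S$ lie in $S$ and in its complement respectively, hence are automatically vertex-disjoint and in different parts.

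\textbf{The subroutine.}
It remains to solve, within each side, the auxiliary problem of deciding whether a given graph $H$ (namely $G[S]$ or $G[V(G)\setminus S]$) contains a connected subgraph on exactly $k$ vertices. This is equivalent to finding a subtree on $k$ vertices, which is the well-studied problem of finding a tree on $k$ vertices as a subgraph; it can be solved in time $2^{O(k)}\cdot n^{O(1)}$ by the color-coding or representative-sets machinery, or even more directly since we only need a connected vertex subset of size $k$, which again fits within $2^{O(k)}\cdot \mathrm{poly}(n)$ via a standard dynamic programming over connected subgraphs combined with color-coding. Running this subroutine on both sides for each $S \in {\cal F}$ costs $2^{O(k)}\cdot \mathrm{poly}(n)$ per set, and multiplying by $|{\cal F}| = 2^{O(k\log k)}\cdot \log n$ gives the claimed total running time $2^{O(k\log k)}\cdot n^2 \log n$ after absorbing the polynomial factors.

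\textbf{Main obstacle.}
I expect the principal point requiring care to be the correctness reduction to the bounded-size certificate: one must argue cleanly that restricting to two disjoint connected $k$-vertex \emph{seeds} loses no generality, both for the existence direction (pruning connected subgraphs down to size exactly $k$ while keeping them connected and disjoint) and for the recovery direction (growing two disjoint connected $k$-vertex seeds back into a genuine bipartition of $V(G)$ with both parts connected and of size at least $k$, which uses connectivity of $G$ to attach all remaining vertices to one side or the other without disconnecting either part). The splitter application and the $2^{O(k)}$ subroutine are then largely routine, so the whole difficulty is concentrated in setting up this equivalence precisely.
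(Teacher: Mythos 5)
Your proposal is correct and follows essentially the same route as the paper: reduce to finding two vertex-disjoint connected $k$-vertex subgraphs, apply the splitter family of Lemma~\ref{lem:splitters} with $U=V(G)$ and $a=b=k$, and test each $S\in\mathcal{F}$ separately on $G[S]$ and $G-S$. The only difference is that your per-set subroutine (color-coding/representative sets for detecting a connected $k$-vertex subgraph) is unnecessarily heavy: a graph contains a connected subgraph on exactly $k$ vertices if and only if it has a connected component with at least $k$ vertices, so the paper simply checks component sizes in linear time per set, which also makes the claimed $2^{O(k\log k)}\cdot n^2\log n$ bound immediate.
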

\begin{proof}
We use the easy property that $G$ is $\lambda_k$-connected if and only if $G$ contains two vertex-disjoint trees $T_1$ and $T_2$ such that $|V(T_1)| \geq k$ and $|V(T_2)| \geq k$. In order to apply Lemma~\ref{lem:splitters}, we take $U= V(G)$ and $a=b=k$, obtaining in time $k^{O(k)}\cdot n  \log n$ the desired family $\mathcal{F}$ of subsets of vertices of $G$. Now, if such trees $T_1$ and $T_2$ exist, then necessarily there exists a set $S \in \mathcal{F}$ such that
$V(T_1)\subseteq S$ and $V(T_2)\cap S=\emptyset$. Therefore, in order to determine whether $G$ is $\lambda_k$-connected or not, it suffices to check, for each set $S \in \mathcal{F}$, whether both $G[S]$ and $G - S$ contain a connected component with at least $k$ vertices. (Note that for each such set $S \in \mathcal{F}$, this can be done in linear time.) Indeed, if such a set $S$ exists, then clearly $G$ is $\lambda_k$-connected. Otherwise, by the property of the family $\mathcal{F}$, $G$ does not contain two disjoint trees $T_1$ and $T_2$ of size $k$ each, and therefore $G$ is not $\lambda_k$-connected. \end{proof}


%

Concerning the parameterized complexity of the $\mathbf{p}$-\textsc{REC} problem, in view of Theorems~\ref{thm:param_k_l} and~\ref{thm:W[1]-hard},  it just remains to settle the case when the parameter is $\ell$ only. The following theorem provides an answer to this question. We will need the following result, which is a reformulation of~\cite[Corollary 1]{BevernFSS13}.

\begin{lemma}[van Bevern \emph{et al}.~\cite{BevernFSS13}]
\label{lem:connected-bisection} Given a graph
$G$ on $n$ vertices and an integer $\ell$, determining whether $\lambda_{\PartIntInf{n/2}}(G) \leq \ell$ can be solved in time $f(\ell) \cdot n^{11}$ for some explicit function $f$ depending only on $\ell$.
\end{lemma}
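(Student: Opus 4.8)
The plan is to exploit a structural characterization that is special to the balanced regime $k=\lfloor n/2\rfloor$: a partition $(X,\overline X)$ into two \emph{connected} parts is, for any choice of $s\in X$ and $t\in\overline X$, an \emph{inclusion-minimal} $s$--$t$ edge cut. Indeed, if one uncuts a single edge $uv\in[X,\overline X]$, then the paths guaranteed by the connectivity of $G[X]$ and of $G[\overline X]$ reconnect $s$ to $t$ through $uv$; conversely, every inclusion-minimal $s$--$t$ edge cut leaves exactly two components (any further component could be uncut into the $s$-side without reconnecting $s$ and $t$), each of which is connected. Hence connected bisections of cut size at most $\ell$ are \emph{exactly} the balanced inclusion-minimal $s$--$t$ edge cuts of size at most $\ell$; and since $k=\lfloor n/2\rfloor$ forces the two parts to have sizes $\lfloor n/2\rfloor$ and $\lceil n/2\rceil$ (each part is $\geq\lfloor n/2\rfloor$ and there can be at most two of them), ``balanced'' becomes an exact size constraint that a dynamic program can verify by counting. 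Note that a direct appeal to Theorem~\ref{thm:param_k_l} does \emph{not} suffice here, as plugging in $k=\lfloor n/2\rfloor$ yields only an $n^{O(\ell)}$ (i.e.\ \textsc{XP}) bound in $\ell$.

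First I would fix one arbitrary vertex $s$ and branch over all $n-1$ candidates for a vertex $t$ on the opposite side; by the symmetry of the partition, one such pair $(s,t)$ is separated by any sought bisection. For a fixed pair I would reduce edge cuts to vertex cuts by subdividing every edge once, so that inclusion-minimal $s$--$t$ edge cuts of size at most $\ell$ in $G$ correspond to inclusion-minimal $s$--$t$ vertex separators of size at most $\ell$ supported on subdivision vertices. I would then apply the \emph{treewidth reduction} technique of Marx, O'Sullivan and Razgon: the union $C$ of all vertices lying on some minimal $s$--$t$ separator of size at most $\ell$, together with $\{s,t\}$, admits a torso $G^\star$ of treewidth bounded by a function $g(\ell)$, computable in time $f(\ell)\cdot(n+|E(G)|)$, which preserves all such minimal separators. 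Every connected component $D$ of the graph lying outside this core is untouched by every small minimal cut, hence lies entirely on one side; I would record $D$ by the number of \emph{original} vertices it contains and attach this weight to the (clique-fied) neighborhood of $D$ in $G^\star$.

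The final step is a dynamic program over a width-$g(\ell)$ tree decomposition of $G^\star$ that searches for a partition of $V(G^\star)$ with $s$ and $t$ on opposite sides, both sides connected, cut at most $\ell$, and total original-vertex weight on each side equal to the prescribed balanced value. Connectivity of the two sides is maintained by a standard connectivity dynamic program over the bag boundary (equivalently, one searches directly for minimal separators), the cut size is tracked by an additive counter bounded by $\ell$, and the balance is tracked by an additive counter over $\intv{0}{n}$ that accumulates the blob weights. Combining the $O(n)$ choices of $t$, the linear-time reduction, and the polynomial-size dynamic-programming table yields an $f(\ell)\cdot n^{O(1)}$ running time, which with careful bookkeeping meets the stated $f(\ell)\cdot n^{11}$ bound.

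The step I expect to be the main obstacle is making the global balance constraint survive the treewidth reduction. The torso operation deletes the blobs and turns their neighborhoods into cliques, so I must argue that (i) in every minimal small cut each blob, together with the real edges joining it to its neighborhood, lies wholly on one side, so that charging its original-vertex count to that side is correct, and (ii) the clique edges introduced by the torso create no spurious connectivity that would let the dynamic program validate a partition whose sides are not actually connected in $G$. Verifying these two points --- essentially, that $G^\star$ equipped with blob weights is a faithful model of the connected balanced-cut problem on $G$ --- is where the real work lies; the enumeration of $(s,t)$ and the connectivity dynamic program are then routine.
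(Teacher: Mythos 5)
The paper does not prove this lemma itself --- it is stated verbatim as a reformulation of Corollary~1 of van Bevern \emph{et al}.~\cite{BevernFSS13} --- and your proposal is essentially a faithful reconstruction of that cited proof: reducing $\lambda_{\lfloor n/2\rfloor}(G)\leq\ell$ to connected bisection, fixing $s$ and enumerating $t$, passing from edge cuts to vertex separators, applying the Marx--O'Sullivan--Razgon treewidth reduction, and running a weighted connectivity dynamic program over the bounded-treewidth torso. The torso-faithfulness issues you flag at the end (each blob lying wholly on one side, and clique edges not creating spurious connectivity) are precisely the points van Bevern \emph{et al}.\ resolve in detail, so your plan follows essentially the same route as the proof the paper relies on.
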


\begin{theorem}\label{thm:FPT-by-l}
The $\mathbf{p}$-\textsc{REC} problem is \emph{{\sf FPT}} when parameterized by $\ell$.
\end{theorem}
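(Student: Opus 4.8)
The plan is to give a Turing reduction from the $\mathbf{p}$-\textsc{REC} problem parameterized by $\ell$ to the connected balanced bisection problem of Lemma~\ref{lem:connected-bisection}, which is already {\sf FPT} when parameterized by $\ell$. Recall that $\lambda_k(G) \le \ell$ holds if and only if $V(G)$ admits a partition $(X,\overline{X})$ with $G[X]$ and $G[\overline{X}]$ connected, $|X|,|\overline{X}| \ge k$, and $|[X,\overline{X}]| \le \ell$. First I would dispose of two trivial cases: if $k > \lfloor n/2\rfloor$ then no such partition exists and we answer \textsc{No}, while if $n \le \ell$ we solve the problem by brute force over all $2^{n} \le 2^{\ell}$ partitions. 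So assume $\ell < n$ and $k \le \lfloor n/2\rfloor$, and note that in any valid partition the smaller side has some size $s$ with $k \le s \le \lfloor n/2\rfloor$.

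The key idea is that, although Lemma~\ref{lem:connected-bisection} only handles perfectly balanced partitions, one can simulate an arbitrary and unknown target size $s$ of the smaller side by padding $G$ with two large cliques pinned to opposite sides of the bisection. Concretely, I would iterate over all triples $(s,v,w)$ with $s \in \{k,\ldots,\lfloor n/2\rfloor\}$ and $v \neq w \in V(G)$, the intended meaning being that $v$ lies in the smaller side, $w$ in the larger side, and the smaller side has exactly $s$ vertices. For each triple I build a graph $G' = G'_{s,v,w}$ by adding a clique $P$ on $2(n-s)$ fresh vertices together with all edges between $P$ and $v$ (so that $P \cup \{v\}$ is a clique), and a clique $P'$ on $n$ fresh vertices together with all edges between $P'$ and $w$. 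Then $|V(G')| = 4n-2s$ is even, so a balanced bisection of $G'$ has both sides of size $2n-s$, and I would test whether $\lambda_{2n-s}(G') \le \ell$ using Lemma~\ref{lem:connected-bisection}, answering \textsc{Yes} iff some triple succeeds. Since there are $O(n^{3})$ triples and each call costs $f(\ell)\cdot |V(G')|^{11} = f(\ell)\cdot O(n^{11})$, the total running time is $f(\ell)\cdot O(n^{14})$, which is {\sf FPT} in $\ell$.

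The correctness amounts to proving that $\lambda_k(G) \le \ell$ iff $\lambda_{2n-s}(G'_{s,v,w}) \le \ell$ for some triple. The forward direction is routine: given a valid partition $(X,\overline{X})$ of $G$ with $|X|=s$, $v \in X$ and $w \in \overline{X}$, the partition $(X \cup P,\ \overline{X} \cup P')$ is a balanced connected bisection of $G'$ with exactly the same cut, since $P$ and $P'$ contribute no crossing edges. The backward direction is the step I expect to be the crux. Here I would argue that, because each of the cliques $P \cup \{v\}$ and $P' \cup \{w\}$ has more than $\ell+1$ vertices, neither can be split by an edge cut of size at most $\ell$, so each lies entirely on one side of any witnessing bisection. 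A counting argument then shows they cannot lie on the same side: their union already has $(2(n-s)+1)+(n+1) = 3n-2s+2 > 2n-s$ vertices, exceeding the size of one side. Hence $P \cup \{v\}$ and $P' \cup \{w\}$ sit on opposite sides, and restricting the bisection to $V(G)$ and removing $P,P'$ recovers a connected partition of $G$ whose smaller side has exactly $s \ge k$ vertices and whose cut has at most $\ell$ edges.

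The delicate point, and the reason for padding with two cliques rather than one, is precisely this pinning. A single clique of size $n-2s$ attached to $v$ would balance the sizes, but it fails to be unsplittable by $\ell$ edges exactly when $s$ is close to $\lfloor n/2\rfloor$, i.e.\ when the target partition is nearly balanced and the clique is small. Using two cliques, each of size at least $n > \ell$ and large enough that their union cannot fit on one side, makes the pinning argument uniform over the whole range $s \in \{k,\ldots,\lfloor n/2\rfloor\}$, so no separate treatment of the near-balanced regime is needed. The remaining routine task is to double-check the clique sizes so that simultaneously (i) both cliques exceed $\ell+1$, (ii) their union exceeds one side, and (iii) the induced sides match $s$ and $n-s$ on the $G$-part; the choices $|P|=2(n-s)$ and $|P'|=n$ satisfy all three.
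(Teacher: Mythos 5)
Your proof is correct, and it follows the same high-level strategy as the paper's --- a Turing reduction to the connected-bisection problem of Lemma~\ref{lem:connected-bisection}, obtained by padding the graph with cliques pinned to guessed vertices --- but the execution differs in a way worth noting. The paper first splits off the case $k \leq \ell$, which it dispatches with the {\sf FPT} algorithm of Theorem~\ref{thm:param_k_l} for parameter $k+\ell \leq 2\ell$; only for $k > \ell$ does it pad, and it does so with a \emph{single} clique $K_p$ of size $p = |X|-|\overline{X}|$ attached to a guessed vertex of the smaller side, giving $O(n^2)$ oracle calls. Its pinning argument (showing $K_p$ cannot straddle or land on the wrong side) genuinely uses the assumption $k > \ell$: in the subcase $p \geq k$ the clique forces a cut of size at least $p \geq k > \ell$. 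Your version instead guesses the size $s$ of the smaller side together with one vertex on each side ($O(n^3)$ calls), and pins with \emph{two} cliques, each of size at least $n+1 \geq \ell+2$, so that neither can be split by a cut of at most $\ell$ edges and, by your counting argument, they must land on opposite sides of the bisection. This makes the reduction uniform over all $k$ --- you never invoke Theorem~\ref{thm:param_k_l}, only the trivial brute force when $n \leq \ell$ --- at the cost of one extra factor of $n$ in the number of oracle calls. Both correctness arguments are sound; yours is slightly more self-contained, the paper's slightly more economical.
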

\begin{proof}
 If $k \leq \ell$, we solve the problem using the {\sf FPT} algorithm given by Theorem~\ref{thm:param_k_l} with parameter $k + \ell \leq 2 \ell$. Otherwise, in the case where $k > \ell$, we proceed to Turing-reduce the problem to the particular case where $k = \PartIntInf{n/2}$, which is {\sf FPT}with parameter $\ell$ by Lemma~\ref{lem:connected-bisection} above, as follows. For each vertex $v$ of the $n$-vertex input graph $G$, and for each integer $p$ with $0 \leq p \leq n - 2k$, let $G_v^p$ be the graph obtained from $G$ by adding a clique $K_p$ on $p$ vertices and all the edges between vertex $v$ and the vertices in $K_p$.

 \begin{claim}\label{claim:reduction}
 It holds that $\lambda_k(G) \leq \ell$ if and only if there exist a vertex $v \in V(G)$ and an integer $p$ with $0 \leq p \leq n - 2k$ such that $\lambda_{\PartIntInf{\frac{n+p}{2}}}(G_v^p) \leq \ell$.
 \end{claim}\vspace{-.2cm}
 \begin{proofof}
 Assume first that  $\lambda_k(G) \leq \ell$. Let $(X,\overline{X})$ be a partition of $V(G)$ achieving $\lambda_k(G)$, where we assume without loss of generality that $X \geq \overline{X}$, let $p = |X| - |\overline{X}|$, and let $v$ be any vertex in $\overline{X}$. Then $p \leq (n - k) - k = n - 2k$ and by construction it holds that $\lambda_{\PartIntInf{\frac{n+p}{2}}}(G_v^p) \leq \lambda_k(G) \leq \ell$.

Conversely, suppose that there exist a vertex $v \in V(G)$ and an integer $p$ with $0 \leq p \leq n - 2k$ such that $\lambda_{\PartIntInf{\frac{n+p}{2}}}(G_v^p) \leq \ell$. Let $(X,\overline{X})$ be a partition of $V(G_v^p)$ achieving $\lambda_{\PartIntInf{\frac{n+p}{2}}}(G_v^p)$, and assume without loss of generality that $v \in \overline{X}$.  We claim that the clique $K_p$ is entirely contained in $\overline{X}$. Indeed, suppose for contradiction that $K_p \cap X \neq \emptyset$, and let $K = K_p \cap X$. Since $G_v^p[X]$ is connected and $v \in \overline{X}$, necessarily $X = K$. We distinguish two cases. If $p < k$, then  $|X| = |K| \leq  p -1 \leq k - 2 \leq n/2 -2 <  \PartIntInf{\frac{n+p}{2}}$, contradicting the definition of $\lambda_{\PartIntInf{\frac{n+p}{2}}}(G_v^p)$.  Otherwise, if $p \geq k$, then we use that the number of edges in $[X,\overline{X}]$ is at least the minimum cut of the clique of size $p+1$ induced by $K_p \cup \{v\}$, which is equal to $p$. That is, $|[X,\overline{X}]| \geq p \geq k > \ell$, contradicting the hypothesis that $(X,\overline{X})$ is a partition of $V(G_v^p)$ achieving $\lambda_{\PartIntInf{\frac{n+p}{2}}}(G_v^p) \leq \ell$.

We now claim that the partition of $V(G)$ given by $(X, \overline{X} \setminus K_p)$ defines a $k$-restricted edge-cut of $G$ with at most $\ell$ edges, concluding the proof. First note that both $G[X]$ and $G[\overline{X} \setminus K_p]$ are connected, since both $G_v^p[X]$ and $G_v^p[\overline{X}]$ are connected by hypothesis, and the removal of $K_p$ from $G_v^p[\overline{X}]$ clearly preserves connectivity. On the other hand, we have that $|X| \geq \PartIntInf{\frac{n+p}{2}} \geq \PartIntInf{\frac{n}{2}} \geq k$ and $|\overline{X} \setminus K_p| = |\overline{X}| - p \geq \PartIntInf{\frac{n+p}{2}} - p \geq \frac{n-p-1}{2} \geq \frac{n-(n-2k)-1}{2} = k - \frac{1}{2}$, and since both $|\overline{X} \setminus K_p|$ and $k$ are integers, the latter inequality implies that $|\overline{X} \setminus K_p| \geq k$. Finally, since $K_p \subseteq \overline{X}$, it holds that $|[X, \overline{X} \setminus K_p]| = |[X, \overline{X}]| \leq \ell $.\end{proofof}

Note that the above claim yields the theorem, as it implies that the problem of deciding whether $\lambda_k(G) \leq \ell$ can be solved by invoking $O(n^2)$ times the {\sf FPT} algorithm given by Lemma~\ref{lem:connected-bisection}.
\end{proof}

To complement Theorem~\ref{thm:FPT-by-l}, in the next theorem we prove that the $\mathbf{p}$-\textsc{REC} problem does not admit polynomial kernels when parameterized by $\ell$, unless  $\text{coNP} \subseteq \text{NP}/\text{poly}$.

\begin{theorem}\label{thm:nopolykernel}
Unless $\text{\emph{coNP}} \subseteq \text{\emph{NP}}/\text{\emph{poly}}$, the $\mathbf{p}$-\textsc{REC} problem does not admit polynomial kernels when parameterized by $\ell$.
\end{theorem}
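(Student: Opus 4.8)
The plan is to rule out polynomial kernels via a \emph{cross-composition} (equivalently, an OR-composition combined with the fact that \textsc{REC} is \textsf{NP}-hard, so that a polynomial kernel would imply $\text{coNP} \subseteq \text{NP}/\text{poly}$). I would compose many instances of an \textsf{NP}-hard source problem into a single instance of $\mathbf{p}$-\textsc{REC} whose parameter $\ell$ is bounded by a polynomial in the size of the largest single instance, \emph{independently} of the number $t$ of composed instances. A natural source is the \textsc{EREC}/connectivity-type problem used in Theorem~\ref{thm:guaranteed}, or more directly the decision problem ``$\lambda_k(G) \le \ell$'' restricted to instances sharing the same $k$ and the same target $\ell$; by choosing a suitable polynomial equivalence relation that buckets instances by $(\,|V|, k, \ell\,)$, cross-composition permits us to assume all $t$ input instances $(G_1,k,\ell),\dots,(G_t,k,\ell)$ have identical parameters.

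The key construction step is to glue the $t$ instances together so that a small cut in the composed graph can only ``see'' one original instance at a time. Concretely, I would take the disjoint union $G_1 \uplus \cdots \uplus G_t$ and attach a common high-connectivity gadget — for instance, a single apex/clique structure $H$ joined to every $G_i$ by a carefully chosen number of edges — so that (i) the composed graph is connected, (ii) any edge-cut of size at most $\ell$ must leave $H$ on one side and can therefore separate vertices of at most one $G_i$ from the rest, and (iii) a valid $k$-restricted cut of size $\le \ell$ in the composed graph exists if and only if some $G_i$ admits one. Setting the new order so that the relevant threshold $k'$ forces the cut to isolate exactly one instance is the delicate part; I would likely pad each $G_i$ to a common size and choose $k'$ just below the size of a single padded block, so that any feasible separation partitions along a single block boundary. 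The crucial budgetary point is that $\ell$ is kept as the \emph{original} bound and does not grow with $t$, which is exactly what cross-composition requires.

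I expect the main obstacle to be the third correctness property: preventing ``cheating'' cuts that slice through the shared gadget or straddle several blocks at once, while simultaneously guaranteeing connectivity of both sides for every honest cut. The tension is that the gadget $H$ must be cheap to keep on one side (so the cut stays within budget $\ell$) yet robust enough that no size-$\le \ell$ cut can split it or split the attachment edges of two different $G_i$'s. I would resolve this by making $H$ internally $(\ell+1)$-edge-connected (e.g.\ a sufficiently large clique) so that any cut of cost $\le \ell$ is forced to keep $H$ intact on one side, thereby confining all the ``action'' of the cut to the edges incident to a single block; this is analogous to the role played by the large clique $C$ in the proof of Theorem~\ref{thm:W[1]-hard}. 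Once this confinement is established, the equivalence between a global feasible cut and a local one in some $G_i$ follows by the same type of exchange/connectivity argument used in Claim~\ref{claim:reduction}, and the kernel lower bound follows from the standard cross-composition framework together with the \textsf{NP}-hardness guaranteed by Theorem~\ref{thm:guaranteed}.
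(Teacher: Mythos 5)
Your high-level framework is the right one (a cross composition in the sense of Bodlaender--Jansen--Kratsch, with output parameter $\ell$ bounded polynomially in the size of a single instance), and the paper indeed proceeds by cross composition. But the construction you sketch has a genuine gap, and it is precisely at the point you flag as "the delicate part": the choice of the threshold $k'$ in the composed instance. A $k'$-restricted cut imposes a size constraint on \emph{both} sides of the composed graph, and no single choice of $k'$ can transfer this into the two-sided constraint $|X|,|\overline{X}|\ge k$ \emph{inside the selected block} $G_i$. If $k'$ is small, both sides of any cut trivially satisfy the size bound once the other blocks and the gadget $H$ are counted, so the composed instance only tests whether some $G_i$ has a small connected edge-cut of \emph{any} balance --- not whether it is a \textsc{Yes}-instance of the source problem. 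If instead, as you propose, $k'$ is "just below the size of a single padded block" so that the cut "partitions along a single block boundary," then the cut consists only of the attachment edges between one whole block and $H$, which carries no information whatsoever about the source instance; the composed instance would be a \textsc{Yes}-instance (or a \textsc{No}-instance) independently of the $G_i$'s. There is also an unaddressed accounting problem: the attachment edges between the isolated piece and $H$ are charged to the budget $\ell$, and their number depends on which piece of $G_i$ is isolated, so the budget of the composed instance cannot be kept equal to the original $\ell$ as you claim.

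The paper resolves this differently, and the difference is essential. It does \emph{not} compose from \textsc{REC}/\textsc{EREC} but from \textsc{Maximum Cut}, and it sets $k := |V(G^*)|/2$, i.e.\ it forces a global \emph{bisection}. Each instance $G_i$ is embedded into a heavy edge-weighted clique block $V_i\cup V_i'$ (edges of $G_i$ being slightly cheaper than non-edges), and the blocks are arranged in a \emph{cycle} via two unit-weight chain edges between consecutive blocks --- not attached to a single apex gadget. The cycle is what lets the unsplit blocks be distributed between the two sides so that both sides are connected and the bisection constraint is met; the weights force any optimal bisection to split exactly one block exactly in half, and the residual cost of that split encodes the maximum cut of the corresponding $G_i$. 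A final step replaces weighted vertices and edges by cliques and parallel edges to return to the unweighted problem while keeping $\ell$ polynomial in $n$. None of these ingredients (bisection threshold, cyclic chaining, weight scheme, de-weighting) appears in your sketch, and the exchange argument you hope to import from Claim~\ref{claim:reduction} does not substitute for them.
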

\begin{proof} The proof is strongly inspired by the one given by van Bevern \emph{et al}.~\cite[Theorem 3]{BevernFSS13} to prove that the \textsc{Minimum Bisection} problem does not admit polynomial kernels, which in turn resembles the proof given by Garey \emph{et al}.~\cite{GJS76} to prove the {\sf NP}-hardness of \textsc{Minimum Bisection}. The main difference with respect to the proof given in~\cite{BevernFSS13} is that we need to make the appropriate modifications to guarantee that both parts left out by the edge-cut are {\sl connected}, which is not an issue in the \textsc{Minimum Bisection} problem.

We will first rule out the existence of polynomial kernels for the generalization of $\mathbf{p}$-\textsc{REC} where the edges have non-negative integer weights, and the objective is to decide whether the input graph can be partitioned into two connected subgraphs with at least $k$ vertices each by removing a set of edges whose total weight does not exceed $\ell$. We call this problem \textsc{Edge-Weighted $\mathbf{p}$-REC}. Then it will just remain to get rid of the edge weights. This is done at the end of the proof of the theorem. 


As shown by Bodlaender \emph{et al}.~\cite{BJK14}, in order to prove that \textsc{Edge-Weighted $\mathbf{p}$-REC} does not admit polynomial kernels when parameterized by $\ell$ (assuming that $\text{coNP} \subseteq \text{NP}/\text{poly}$), it is sufficient to define a \emph{cross composition} from an {\sf NP}-hard problem to \textsc{Edge-Weighted $\mathbf{p}$-REC}. In our case, the {\sf NP}-hard problem is \textsc{Maximum Cut} (see~\cite{GareyJ79comp}), which is defined as follows. Given a graph $G=(V,E)$ and an integer $p$, one has to decide whether $V$ can be partitioned into two sets $A$ and $B$ such that there are at least $p$ edges with an endpoint in $A$ and an endpoint in $B$.

A cross composition from \textsc{Maximum Cut} to \textsc{Edge-Weighted $\mathbf{p}$-REC} parameterized by $\ell$ consists of a polynomial-time algorithm that, given $t$ instances $(G_1,p_1), \ldots, (G_t,p_t)$ of \textsc{Maximum Cut}, constructs an instance $(G^*,k,\ell)$ of \textsc{Edge-Weighted $\mathbf{p}$-REC} such that $(G^*,k,\ell)$ is a \textsc{Yes}-instance if and only if one of the $t$ instances of \textsc{Maximum Cut} is a \textsc{Yes}-instance, and such that $\ell$ is polynomially bounded as a function of $\max_{1 \leq i \leq t}|V(G_i)|$. Similarly to~\cite{BevernFSS13}, we may safely assume that for each $1 \leq i \leq t$ we have $|V(G_i)|=:n$ and $p_i=:p$ (by the arguments given in~\cite{BJK14}),  that $1 \leq p \leq n^2$ (as if $p=0$ all instances are \textsc{Yes}-instances, and if $p > n^2$ all instances are \textsc{No}-instances), and that $t$ is odd (otherwise, we can add a \textsc{No}-instance consisting of an edgeless graph on $n$ vertices).

Given $(G_1,p), \ldots, (G_t,p)$, we create $G^*$ as follows; see Fig.~\ref{fig:nopoly} for an illustration. Let $w_1 := 5n^2$ and $w_2 := 5$. For each graph $G_i = (V_i, E_i)$ add to $G^*$ the vertices in $V_i$ and a clique $V_i'$ with $|V_i|=n$ vertices whose edges have weight $w_1$. Add an edge of weight $w_1$ between each vertex in $V_i$ and each vertex in $V_i'$. For each pair of vertices $u,v \in V_i$, add the edge $\{u,v\}$ to $G^*$ with weight $w_1 - w_2$ if $\{u,v\} \in E_i$, and with weight $w_1$ otherwise. Let $s_i^1,s_i^2$ be two arbitrary distinct vertices in $V_i'$, which we call \emph{link} vertices. For $1 \leq i \leq t-1$, add two edges with weight 1 between $s_{i}^1$ and $s_{i+1}^1$ and between $s_{i}^2$ and $s_{i+1}^2$, and two edges with weight 1 between $s_{t}^1$ and $s_{1}^1$ and between $s_{t}^2$ and $s_{1}^2$. This completes the construction of $G^*$. These $2t$ edges among distinct $V_i'$'s are called \emph{chain} edges (cf. the thicker edges in Fig.~\ref{fig:nopoly}). Finally, we set $k := |V(G^*)|/2$ and $\ell := w_1  n^2 - w_2 p + 4$. Note that $k$ is {\sl not} polynomially bounded in terms of $n$, but this is not a problem since the parameter we consider is $\ell$, which is bounded by $5n^4$. This construction can be clearly performed in polynomial time in $t \cdot n$. We claim that $(G^*,k,\ell)$ is a \textsc{Yes}-instance of \textsc{Edge-Weighted $\mathbf{p}$-REC} if and only if there exists $i \in \{1,\ldots,t\}$ such that $(G_i, p)$ is a  \textsc{Yes}-instance of \textsc{Maximum Cut}.

\begin{figure}[htb]
\begin{center}
 \includegraphics[width=1.05\textwidth]{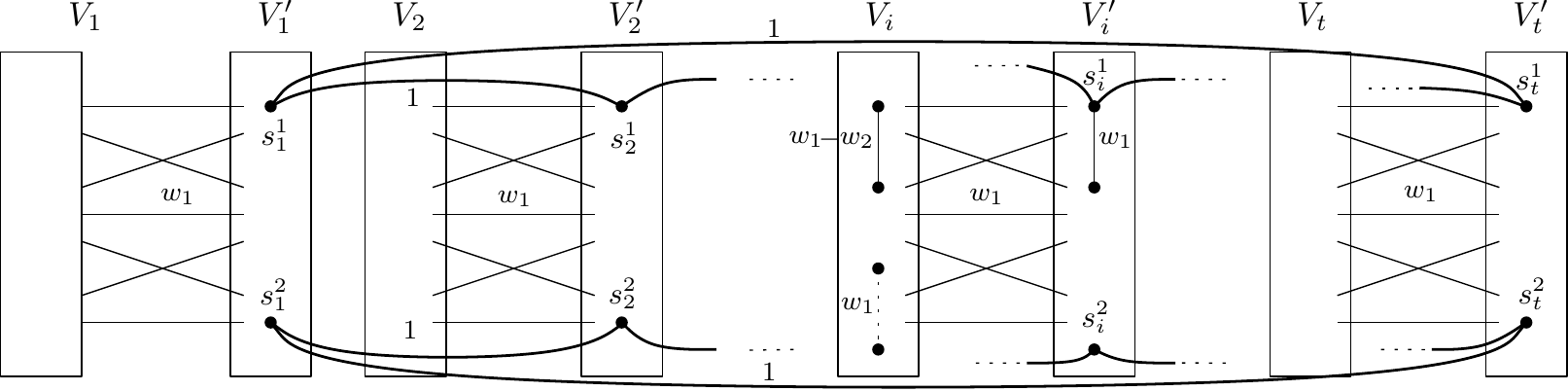}
 \caption{\label{fig:nopoly} Illustration of the graph $G^*$ in the proof of Theorem~\ref{thm:nopolykernel}.\vspace{-.55cm}}
\end{center}
\end{figure}

Assume first that there exists $i \in \{1,\ldots,t\}$ such that $(G_i, p)$ is a  \textsc{Yes}-instance of \textsc{Maximum Cut}. Assume without loss of generality that $i =1$, and let $V_1 = A \uplus B$ such that there are at least $p$ edges in $E_1$ between $A$ and $B$. We proceed to partition $V(G^*)$ into two equally-sized sets $A'$ and $B'$ such that both $G^*[A']$ and $G^*[B']$ are connected, and such that the total weight of the edges in $G^*$ with one endpoint in $A'$ and one endpoint in $B'$ is at most $\ell$. The set $A'$ contains $V_1 \cap A$, any set of $|B|$ vertices in $V_1'$ containing exactly one of $s_1^1$ and $s_1^2$ (this is possible since $1 \leq |B| \leq n-1$), and $\bigcup_{i=2}^{\PartIntSup{t/2}}V_i \cup V_i'$. Then $B' = V(G^*) \setminus A'$. Since $t$ is odd, $|A'| = |B'|$. Let us now see that $G^*[A']$ is connected.  As $1 \leq |A| \leq n-1$, the set $V_1 \cap A'$ is connected to $V_1' \cap A'$, which is connected to $V_2'$ since $A'$ contains exactly one of the link vertices $s_1^1$ and $s_1^2$. The graph $G^*[\bigcup_{i=2}^{\PartIntSup{t/2}}V_i \cup V_i']$ is clearly connected because of the chain edges, which implies that $G^*[A']$ is indeed connected. The proof for the connectivity of $G^*[B']$ is similar, using that $V_1 \cap B'$ is connected to $V_1' \cap B'$ since $1 \leq |B| \leq n-1$, which is in turn connected to $V_t'$ since $B'$ contains exactly one of the link vertices $s_1^1$ and $s_1^2$. Finally, let us show that the total weight of the edges between $A'$ and $B'$ is at most $\ell$. Note first that two chain edges incident to $V_1'$ and two chain edges incident to $V_{\PartIntSup{t/2}}'$ belong to the cut defined by $A'$ and $B'$, and no other chain edge belongs to the cut. Beside the chain edges, only edges in the graph $G^*[V_1 \cup V_1']$ are cut. Note that $G^*[V_1 \cup V_1']$ is a clique on $2n$ vertices and each of $(V_1 \cup V_1') \cap A'$ and $(V_1 \cup V_1') \cap B'$ contains $n$ vertices. Since $|[A,B]| \ge p$, 
at least $p$ of the edges of weight $w_1 - w_2$ belong to the cut. Therefore, the total weight of the cut is at most $w_1n^2 - w_2p +4 = \ell$.

Conversely, assume that for all $i \in \{1,\ldots,t\}$, $(G_i,p)$ is a \textsc{No}-instance of \textsc{Maximum Cut}, and we want to prove that  $(G^*,k,\ell)$ is a \textsc{No}-instance of \textsc{Edge-Weighted $\mathbf{p}$-REC}. Let $A \uplus B$ be a partition of $V(G^*)$ such that $|A|=|B|$ and both $G^*[A]$ and $G^*[B]$ are connected, and such that the weight of the edges between $A$ and $B$ is minimized among all such partitions. For $1 \leq i \leq t$, we let $a_i := |(V_i \cup V_i') \cap A|$. Since for $1 \leq i \leq t$, $(G_i,p)$ is a \textsc{No}-instance of \textsc{Maximum Cut}, any bipartition of $V_i$ cuts at most $p-1$ edges. Therefore, the total weight of the edges between $(V_i \cup V_i') \cap A$ and $(V_i \cup V_i') \cap B$ is at least $w_1 a_i (2n - a_i) - (p-1)w_2$.

Since $t$ is odd, necessarily at least one of the graphs $G_i$ is cut by $A \uplus B$. Assume first that exactly one graph $G_i$ is cut by $A \uplus B$. Since $|A|=|B|$, we have that $a_i = n$, so the value of the cut is at least $w_1n^2 - (p-1)w_2 = w_1n^2 - pw_2 + w_2 > w_1n^2 - pw_2 + 4 = \ell$, and thus $(G^*,k,\ell)$ is a \textsc{No}-instance of \textsc{Edge-Weighted $\mathbf{p}$-REC}.

We claim that there is always exactly one graph $G_i$ cut by $A \uplus B$. Assume for contradiction that it is not the case, that is, that there are two strictly positive values $a_i,a_j$ for some $i \neq j$. By symmetry between $A$ and $B$, we may assume that $a_i + a_j \leq 2n$. The total weight of the edges cut in $G^*[V_i \cup V_i']$ and $G^*[V_j \cup V_j']$ is at least

\[\begin{array}{lc}
w_1a_i(2n - a_i) - (p-1)w_2 + w_1a_j(2n - a_j) - (p-1)w_2 &=\\
2nw_1(a_i+a_j) - w_1(a_i^2 + a_j^2) -2w_2(p-1).&
\end{array}\]

Now we construct another solution $A' \uplus B'$ of \textsc{Edge-Weighted $\mathbf{p}$-REC} in $G^*$ where the $a_i + a_j$ vertices are cut in only one of $V_i \cup V_i'$ and $V_j \cup V_j'$, say $V_i \cup V_i'$ (note that this is possible since $a_i + a_j \leq 2n$). In order to do so, as far as there exists such a pair $a_i, a_j$, we proceed as follows.  If $a_i + a_j = 2n$, then $V_i \cup V_i'$ is entirely contained in $A'$ or $B'$. Otherwise, if $a_i + a_j < 2n$, then we choose $(V_i \cup V_i') \cap A'$ such that it contains exactly one of $s_i^1$ and $s_i^2$. At the end of this procedure, exactly one graph $G_i$ is cut. Finally, we arrange the other $G_{\ell}$'s, with $\ell \neq i$, consecutively into $A'$ and $B'$. That is, we put the vertices of $\bigcup_{\ell=i+1}^{i + \PartIntInf{t/2}}V_{\ell} \cup V_{\ell'}$ into $A'$, and the vertices of $\bigcup_{\ell=i + \PartIntSup{t/2}}^{i -1}V_{\ell} \cup V_{\ell'}$ into $B'$, where the indices are counted cyclically from $1$ to $t$. The connectivity of  both $G^*[A']$ and $G^*[B']$ is guaranteed by the chain edges and the choice of the selector vertices $s_i^1$ and $s_i^2$.


Let $i, j$ be two indices for which the above procedure has been applied. Taking into account that each $V_i'$ has  four incident chain edges, the total weight of the edges cut in $G^*[V_i \cup V_i']$ and $G^*[V_j \cup V_j']$ by the new solution is at most
\[\begin{array}{l}
w_1(a_i + a_j) (2n - a_i - a_j) + 8 \ =\\
2nw_1(a_i+a_j) - w_1(a_i^2 + a_j^2) -2w_1a_ia_j + 8.
\end{array}\]

That is, the weight of the cut defined by $A \uplus B$ minus the weight of the cut defined by $A' \uplus B'$ is at least
\[\begin{array}{rcl}
-2w_2(p-1) + 2w_1a_ia_j - 8 & = &\\
2(w_1a_ia_j - w_2(p-1) - 4) & \geq & \\
2 (w_1 - w_2(n^2-1) -4 ) & > &0,
\end{array}\]


where we have used that $a_i,a_j \geq 1$, $p \leq n^2$, $w_1 = 5n^2$, and $w_2 = 5$. In other words, $A' \uplus B'$ defines a cut of strictly smaller weight, contradicting the definition of $A \uplus B$.

\vspace{.2cm}
To conclude the proof of the theorem, it just remains to deal with the edge weights. As in~\cite{BevernFSS13}, we show how to convert the instance $(G^*,k,\ell)$ of \textsc{Edge-Weighted $\mathbf{p}$-REC} that we just constructed into an equivalent instance of \textsc{$\mathbf{p}$-REC} such that the resulting parameter remains polynomial in $n$. Given $(G^*,k,\ell)$, we define $(\hat{G},k,\ell)$ as the instance of \textsc{$\mathbf{p}$-REC}, where $\hat{G}$ is an unweighted graph obtained from  $G^*$ as follows. We replace each vertex $v$ of $G^*$ with a clique $C_v$ of size $w_1 + \ell + 1$, and for each edge $\{u,v\}$ of $G^*$ with weight $w$, we add $w$ pairwise disjoint edges between the cliques $C_u$ and $C_v$. Since no cut of size at most $\ell$ in $\hat{G}$ can separate a clique $C_v$ introduced for a vertex $v$, it follows that $(G^*,k,\ell)$ is a \textsc{Yes}-instance of \textsc{Edge-Weighted $\mathbf{p}$-REC}  if and only if $(\hat{G},k,\ell)$ is a \textsc{Yes}-instance of \textsc{$\mathbf{p}$-REC}. Finally, it is clear that the desired cut size $\ell$ is still polynomial in $n$.\end{proof}



\section{Considering the maximum degree as a parameter}
\label{sec:bounded-degree}

Towards understanding the parameterized complexity of the \textsc{REC} problem, one may wonder whether considering the maximum degree of the input graph as an extra parameter turns the problem easier (this is a classical approach in parameterized complexity, see for instance~\cite{Mar06,MaPh14}). We first prove that, from a classical complexity point of view, bounding the degree of the input graph does not turn the problem easier. Before stating the hardness result, we need the define the \textsc{3-Dimensional Matching} problem, \textsc{3DM} for short.

%


%

An instance of \textsc{3DM} consists of a set  $W=R\cup B\cup Y$, where $R, B, Y$ are disjoint sets with $|R|=|B|=|Y|=m$, and a set of triples $T\subseteq R\times B\times Y$. The question is whether there exists a matching $M\subseteq T$ covering $W$, i.e., $|M|=m$ and each element of $W=R\cup B\cup Y $ occurs in exactly one
triple of $M$.

An instance of \textsc{3DM} can be represented by a bipartite graph $G_I=(W\cup T, E_I)$,  where $E_I=\bigcup\limits_{t=(r,b,y)\in T}\{\{r,t\},\{b,t\},\{y,t\}\}$; see Fig.~\ref{3DM}.

\begin{figure}[htb]
\begin{center}
 \includegraphics[width=1.05\textwidth]{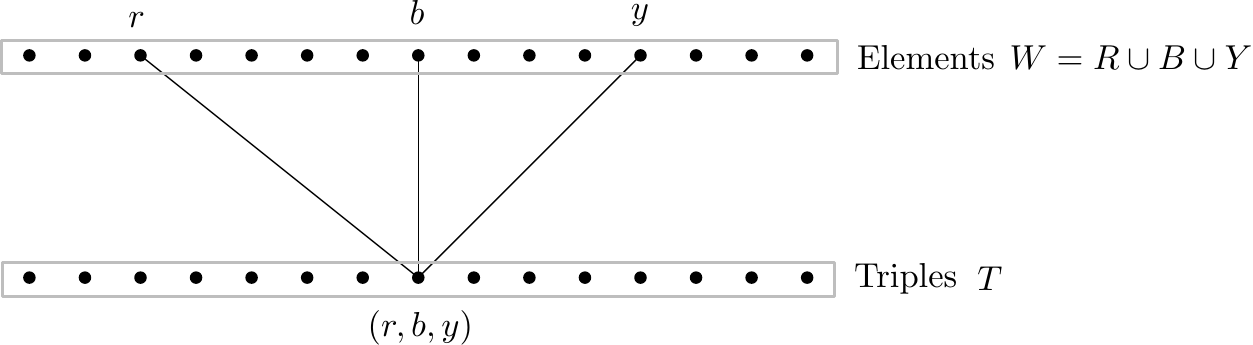}
 \caption{\label{3DM}Representation of an instance of \textsc{3DM}.}
\end{center}
\end{figure}


It is known that 3DM is {\sf NP}-complete  even if each element of $W$ appears in
$2$ or $3$ triples only~\cite{DMFA86,DyFr85}.  In \cite[Theorem 2.2]{DyFr85} it is proved that  partitioning a graph $G$ into two connected subgraphs of equal size 
is {\sf NP}-hard, using a reduction from 3DM. It is worth noting that the graph constructed in the {\sf NP}-hardness reduction contains only two vertices of degree greater than five. In Theorem \ref{thm:NPhard-bounded-degree} we appropriately modify the reduction of~\cite[Theorem 2.2]{DyFr85} so that the constructed graph has maximum degree at most 5.


\begin{theorem}\label{thm:NPhard-bounded-degree}
The  \textsc{EREC} problem 
  is {\sf NP}-complete  even if the maximum degree of the input graph is 5.
\end{theorem}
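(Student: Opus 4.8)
The plan is to modify the reduction from \textsc{3DM} used in~\cite[Theorem 2.2]{DyFr85}, which shows that partitioning a graph into two connected subgraphs of equal size is {\sf NP}-hard. As noted in the excerpt, the graph built in that reduction has only \emph{two} vertices of degree larger than five, so the entire task reduces to locally replacing these two high-degree vertices by low-degree gadgets while preserving the reduction's correctness. I would start from the instance $G_I=(W\cup T,E_I)$ of \textsc{3DM} where every element of $W$ appears in only $2$ or $3$ triples (so all element-vertices already have degree at most $3$, and every triple-vertex has degree exactly $3$), and recall precisely how~\cite{DyFr85} attaches its two auxiliary ``balancing'' vertices to force the equipartition and to encode the matching. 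These two vertices are the only obstruction to bounded degree.

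First I would replace each of the two offending high-degree vertices by a \emph{tree} (or a path-like gadget) of maximum degree at most $5$ whose leaves take over the many incidences that the single vertex previously had. The key point is that a connected low-degree tree behaves, for the purposes of the equal-size connected bipartition, exactly like a single vertex of large degree: as long as the tree is forced to lie entirely on one side of the partition, contracting it recovers the original construction of~\cite{DyFr85}. So I would add enough ``padding'' vertices (e.g.\ a pendant clique or a heavy path) to each gadget so that splitting the tree between the two parts is never beneficial for achieving the target cut size, thereby guaranteeing that each gadget stays connected and on a single side in any optimal solution. I must also verify that introducing these gadgets keeps the total number of vertices even (adding padding as needed) so that the target $k=n/2$ is well-defined, and that the intended partition still has exactly $n/2$ vertices on each side.

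The correctness argument then proceeds in two directions, mirroring the original proof: from a perfect matching $M\subseteq T$ one reads off a connected equipartition of the modified graph, and conversely any connected equipartition into halves forces (via the padding that pins the gadgets) the combinatorial structure that encodes a perfect matching of $W$. Since \textsc{EREC} asks whether $G$ is $\lambda_{n/2}$-connected, i.e.\ whether it admits two vertex-disjoint connected subgraphs of order $n/2$ each, and membership in {\sf NP} is immediate (a certificate is the bipartition), establishing this equivalence completes the proof that the problem is {\sf NP}-complete on graphs of maximum degree~$5$.

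The main obstacle I anticipate is the \emph{book-keeping of degrees and sizes simultaneously}: the gadgets replacing the two high-degree vertices must (i) have maximum degree at most $5$ at every internal vertex while still realizing all the original incidences, (ii) be rigid enough that no optimal equipartition ever cuts them, and (iii) preserve the exact parity and the value $n/2$ of the target so that the reduction remains an equivalence rather than merely an implication in one direction. Getting condition~(ii) right — choosing the padding heavy enough that separating a gadget costs strictly more cut edges than any honest solution, yet light enough that it does not itself unbalance the two halves — is the delicate part, and it is where the careful but routine counting will concentrate.
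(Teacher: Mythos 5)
Your high-level plan --- replace the two high-degree hubs of the Dyer--Frieze construction by bounded-degree path/tree gadgets, attach padding, and redo the size bookkeeping --- is the same as the paper's: it substitutes two paths $T_a,T_b$ with one vertex per triple (so the triple vertices reach degree $5$ and everything else stays at degree at most $4$), and hangs a pendant path $P_\sigma$ of prescribed length off each element vertex, each vertex of $T_b$, and one special vertex $a$.

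However, the mechanism you propose for pinning the gadgets to one side is the wrong one for this problem. You argue that the padding should be ``heavy enough that separating a gadget costs strictly more cut edges than any honest solution'' and that no ``optimal equipartition'' ever cuts a gadget. But \textsc{EREC} has no cut budget and no objective function: the question is purely whether there \emph{exist} two vertex-disjoint connected subgraphs of order $n/2$ each, and a solution may cut as many edges as it likes, so cut-cost comparisons give you nothing. The pinning must instead come from connectivity and counting: a pendant path $P_\sigma$ attached to the rest of the graph only at $\sigma$ cannot be split between the two sides, since any piece of it placed on the side opposite to $\sigma$ would form an isolated component of that side; hence $P_\sigma$ travels with $\sigma$. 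The pendant path at $a$ is then made so long that the side containing $a$ has only $2|T|-m$ vertices of slack, which is strictly less than the common length $n_b=2m^3$ of all the other pendant paths, forcing every vertex of $W\cup T_b$ (together with its pendant) onto the other side; connectivity of that other side finally forces the at most $m$ triples excluded from $a$'s side to cover all of $W$, i.e., to be a perfect matching. A second, smaller issue: your suggested ``pendant clique'' padding is incompatible with the very degree bound you are enforcing, since any clique large enough to act as heavy padding has minimum degree exceeding $5$; the padding has to be a long path (or another degree-bounded tree), as in the paper.
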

\begin{proof} Given an instance $(W,T)$ of 3DM with $W=R\cup B\cup Y$, $|R|=|B|=|Y|=m$, and $T\subseteq R\times B\times Y$ such that each element of $W$ appears in
$2$ or $3$ triples only, we define an $n$-vertex graph $G=(V,E)$ with maximum
degree 5 as follows (see Fig.~\ref{fig:reduction} for an illustration).

\begin{figure}[htb]
\begin{center}
 \includegraphics[width=.85\textwidth]{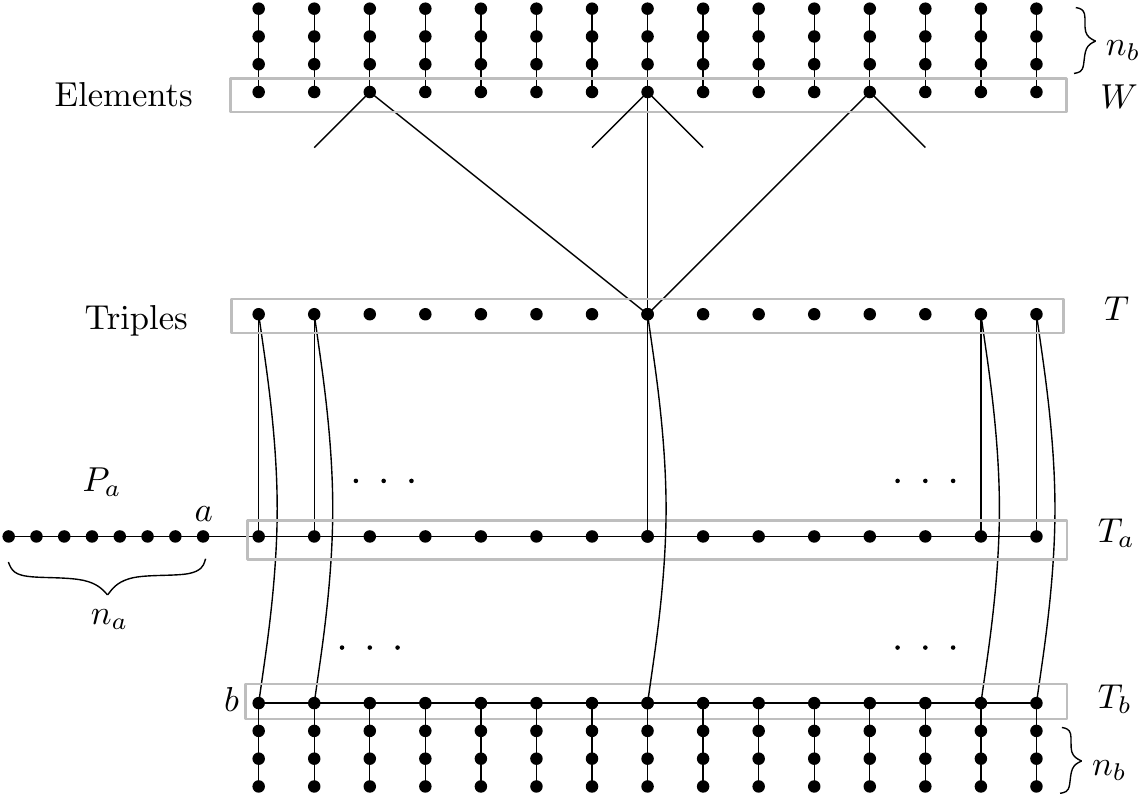}
 \caption{\label{fig:reduction}Construction of the graph $G$ in the proof of Theorem~\ref{thm:NPhard-bounded-degree}, with $\Delta(G)=5$.}
\end{center}
\end{figure}


The set of vertices of $G$ is $$V=W\cup
T\cup T_a\cup T_b\cup\mathcal{P}\cup \{a\},$$ where
  $T_a=\{t^a_1,\ldots,t^{a}_{|T|}\}$,
$T_b=\{b=t^b_1,t^b_2,\ldots,t^b_{|T|}\}$, $T=\{t_1,\ldots,t_{|T|}\}$ is the set of triples, and  $\mathcal{P}=\bigcup\limits_{\sigma\in W\cup T_b\cup\{a\}} P_{\sigma}$, where   $P_{\sigma}=\{(\sigma,t) : t=1,\ldots,n_{\sigma}\}$ with
$n_a=(3m+|T|)n_b+5m-|T|-1$, $n_b=2m^3$, and $n_{\sigma}=n_b$ for
every $\sigma\in W\cup T_b$.

The set of edges of $G$ is
$$E=E_I \cup E_{T_a} \cup E_{T_b}\cup E_{T^+}\bigcup\limits_{\sigma\in W\cup T_b\cup\{a\}}E_{\sigma},$$ where $E_{T_a}=\{\{t_i^a,t_{i+1}^a\} : 1\le i\le|T|-1\}$,
$E_{T_b}=\{\{t_i^b,t_{i+1}^b\} : 1\le i\le|T|-1\}$, $E_{T^+}=\{\{t_i,t_i^a\},\{t_i,t_i^b\} : 1\le i\le|T|\}$, and $E_{\sigma}=\{\{\sigma,(\sigma,1)\}\}\cup
\{\{(\sigma,t),(\sigma,t+1)\} : 1\le t\le n_{\sigma}-1\}\cup
\{a,t^a_1\}$ for every $\sigma\in W\cup T_b\cup\{a\}.$

Note that the maximum degree of $G$ is indeed 5 (only vertices of $T$ could get degree $5$, all other vertices have degree at most $4$). Since $n=1+3m+3|T|+n_a+(3m+|T|)n_b$, we can observe that $$n=2(n_a+1+2|T|-m).$$

Next, we show that for $k=n/2$,  $G$ is \textsc{Yes}-instance of the \textsc{REC} problem 
 if and only if
$T$ contains a matching covering $W$.

One direction is easy. Suppose first that $T$ contains a matching $M$ covering $W$. Let $S=\{a\}\cup
P_a\cup T_a\cup (T \setminus M)$. It is straightforward to check that
$|S|=n/2$ and that $G[S]$, $ G[V\setminus S]$ are both connected.

Conversely, suppose that $G$ can be partitioned into $2$ connected subgraphs
$G[S]$, $G[V\setminus S]$ with $|S|=n/2$.  We can assume that $a\in S$, and then it follows
that $P_a\subseteq S$. Now $|S \setminus (P_a\cup\{a\})|=2|T|-m<2m^3=n_b$ since $|T|\le m^3$.  As $P_{\sigma}\subseteq S$ if and only if $\sigma\in
S\cap (W\cup T_b)$, then $S\cap(W\cup T_b)=\emptyset$ since
$|S \setminus (P_a\cup\{a\})|<n_b$ and $|P_{\sigma}|=n_b$ for every $\sigma\in W\cup T_b$. Hence
$S \setminus (P_a\cup\{a\})\subseteq T\cup T_a$. Let $M=(V\setminus S)\cap T$. Then $|M|\le m$ since $|S \setminus (P_a\cup\{a\})|=2|T|-m$. Finally, as $G[V\setminus S]$ is
connected and $W\cup T_b\subseteq V\setminus S$, it follows that
$|M|\ge m$. Hence $|M|=m$ and $M$ must be a matching covering $W$. \end{proof}

In order to understand to which extent the vertices of high degree make the complexity of computing the restricted edge-connectivity of a graph hard, we also consider the maximum degree of the input graph as a parameter for the $\mathbf{p}$-\textsc{REC} problem.

%

\begin{theorem}\label{thm:FPTdegree}
The $\mathbf{p}$-\textsc{REC} problem is {\sf FPT} when parameterized by $k$ and the maximum degree $\Delta$ of the input graph.
\end{theorem}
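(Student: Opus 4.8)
The plan is to reduce the computation of $\lambda_k(G)$ to a bounded number of minimum $s$-$t$ cut computations, exploiting that bounded degree severely limits the number of small connected subgraphs. The starting point is the standard observation that if $(X,\overline{X})$ is an optimal $k$-restricted edge-cut, then, since $G[X]$ and $G[\overline{X}]$ are connected and of size at least $k$, each side contains a connected subset of \emph{exactly} $k$ vertices (take a spanning tree and peel off leaves); call these the \emph{cores} $A \subseteq X$ and $B \subseteq \overline{X}$. Conversely, once we fix two vertex-disjoint connected $k$-subsets $A$ and $B$, I would contract $A$ into a single vertex $s$ and $B$ into a single vertex $t$, and compute a minimum $s$-$t$ cut in the resulting graph. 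The claim I would establish is that
$$\lambda_k(G) = \min_{(A,B)} \mathrm{mincut}(s,t),$$
where the minimum ranges over all ordered pairs $(A,B)$ of vertex-disjoint connected subsets of $V(G)$ of size $k$ each, and $\lambda_k(G)$ fails to exist precisely when no such pair exists. Deciding whether $\lambda_k(G) \leq \ell$ then amounts to checking whether this minimum is at most $\ell$.

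The key step, and the main obstacle, is to prove that every minimum $s$-$t$ cut in a connected graph induces connected subgraphs on \emph{both} sides; this is exactly what guarantees that each value $\mathrm{mincut}(s,t)$ computed above corresponds to a genuine $k$-restricted edge-cut (in general, finding a minimum cut with both sides connected is hard, so some structural input is needed). I would argue by an exchange argument: let $(X',\overline{X'})$ be a minimum $s$-$t$ cut and suppose $G[\overline{X'}]$ is disconnected, with a connected component $C$ not containing $t$ (the case of $G[X']$ is symmetric, as $s \notin C$). Moving $C$ from $\overline{X'}$ to $X'$ turns all edges between $C$ and $X'$ into internal edges and creates no new cut edges, since $C$ has no edge to $\overline{X'} \setminus C$; hence the cut strictly decreases unless $C$ has no edge to $X'$ at all, in which case $C$ would be a connected component of $G$, contradicting the connectivity of $G$. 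Either way we reach a contradiction, so both sides are connected. With this lemma in hand, the displayed equality follows in both directions: the optimal cut together with its cores $A^*,B^*$ shows $\min_{(A,B)}\mathrm{mincut}(s,t) \leq \lambda_k(G)$, while for any pair $(A,B)$ the minimum $s$-$t$ cut yields a partition with $A \subseteq X'$, $B \subseteq \overline{X'}$, both sides connected and of size at least $k$, hence a valid $k$-restricted edge-cut of value at least $\lambda_k(G)$.

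It then remains to bound the number of pairs $(A,B)$ and the cost of enumerating them, which is where the bounded degree enters. A standard counting argument shows that the number of connected subgraphs of order $k$ containing a fixed vertex in a graph of maximum degree $\Delta$ is bounded by a function of $k$ and $\Delta$ only, say at most $(e\Delta)^{k}$, and that they can be listed within that many steps up to a polynomial factor. Enumerating these subgraphs from every possible seed vertex produces all connected $k$-subsets in time $(e\Delta)^{k}\cdot n^{O(1)}$, and pairing them gives at most $(e\Delta)^{2k}\cdot n^{O(1)}$ candidate pairs. For each pair I would perform the two contractions and one minimum $s$-$t$ cut computation, which runs in polynomial time by any standard max-flow algorithm and is independent of $\ell$. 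Taking the minimum of all these cut values and comparing it with $\ell$ (answering \textsc{No} when no valid pair exists) decides the instance in total time $2^{O(k\log\Delta)}\cdot n^{O(1)}$, which is {\sf FPT} when parameterized by $k$ and $\Delta$.
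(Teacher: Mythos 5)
Your proposal is correct and follows essentially the same route as the paper's proof: enumerate, using the degree bound, all pairs of vertex-disjoint connected $k$-vertex ``cores'' (the paper uses $k$-vertex trees rooted at each pair of vertices, bounded by $\Delta^{2k}$), contract each core to a single vertex, and run a polynomial-time minimum $s$-$t$ cut computation for each pair. The only difference is that you spell out the exchange argument showing that a minimum $s$-$t$ cut in a connected graph has both sides connected, a correctness detail the paper leaves implicit; this is a worthwhile addition but not a different approach.
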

\begin{proof}
The algorithm is based on a simple exhaustive search. We use the property that, for any graph $G$ and any two integers $k,\ell$, $\lambda_k(G) \leq \ell$ if and only if $G$ contains two vertex-disjoint trees $T_1$ and $T_2$ with $|V(T_1)| \geq k$ and $|V(T_2)| \geq k$, such that there exists an edge set $S$ in $G$ with $|S| \leq \ell$ such that in $G - S$ the trees $T_1$ and $T_2$ belong to different connected components. Hence, we just have to determine whether these trees exist in $G$ or not. For doing so, for every pair of distinct vertices $v_1$ and $v_2$ of $G$, we exhaustively consider all trees $T_1$ and $T_2$ with $k$ vertices containing $v_1$ and $v_2$, respectively. Note that the number of such trees is at most $\Delta^{2k}$. For every pair of vertex-disjoint trees $T_1$ and $T_2$, we proceed as follows. We contract  tree $T_1$ (resp. $T_2$) to a single vertex $t_1$ (resp. $t_2$), keeping edge multiplicities, and then we run in the resulting graph a polynomial-time \textsc{Minimum Cut} algorithm between $t_1$ and $t_2$ (cf.~\cite{StWa97}). If the size of the returned edge-cut is at most $\ell$, then $T_1$ and $T_2$ are the desired trees. Otherwise, we continue searching. It is clear that the overall running time of this algorithm is $O(\Delta^{2k} \cdot n^{O(1)})$.
\end{proof}

\vspace{.2cm}
\noindent \textbf{Acknowledgement}. We would like to thank the anonymous referees  for helpful remarks that improved the presentation of the manuscript. We are particularly grateful for the ideas to prove Theorem~\ref{thm:FPT-by-l}, a result that we had left as an open question in the conference version of this article.


\bibliographystyle{abbrv}
\bibliography{edge-conn-FPT}

\end{document}